\documentclass[journal]{IEEEtran}
\usepackage{amsmath,amssymb,mathtools}
\usepackage{amsthm}
\usepackage[percent]{overpic}
\theoremstyle{definition}
\newtheorem{theorem}{Theorem}
\newtheorem{remark}{Remark}
\usepackage{titlesec}
\usepackage{xcolor}
\usepackage{mathrsfs}
\definecolor{lightred}{RGB}{180,60,60}

\usepackage{graphicx}
\usepackage[caption=false,font=footnotesize]{subfig}

\usepackage[section]{placeins}

\usepackage[ruled,vlined]{algorithm2e}
\usepackage{algpseudocode}

\usepackage{comment}
\usepackage{url}
\usepackage{cite}
\usepackage[acronym,shortcuts]{glossaries}

\usepackage{soul}
\usepackage[most]{tcolorbox}

\sethlcolor{red!18}

\newtcolorbox{revblock}{
    colback=red!8,
    colframe=red!45!black,
    boxrule=0.4pt,
    arc=1.5mm,
    left=2mm,
    right=2mm,
    top=1mm,
    bottom=1mm,
    before skip=0pt,
    after skip=0pt,
    parbox=false
}

\theoremstyle{plain}

\algrenewcommand\algorithmicrequire{\textbf{Input:}}
\algrenewcommand\algorithmicensure{\textbf{Output:}}
\newtheorem{lemma}[theorem]{Lemma}

\theoremstyle{definition}

\theoremstyle{remark}
\newacronym{LDPC}{LDPC}{low-density parity check}
\newacronym{ISFFT}{ISFFT}{inverse symplectic finite Fourier transform}
\newacronym{SFFT}{SFFT}{symplectic finite Fourier transform}
\newacronym{DFT}{DFT}{discrete Fourier transform}
\newacronym{DoF}{DoF}{degrees of freedom}
\newacronym{2D}{2D}{two-dimensional}
\newacronym{1D}{1D}{one-dimensional}
\newacronym{3GPP}{3GPP}{3rd generation partnership project}
\newacronym{5G}{5G}{fifth generation}
\newacronym{5GNR}{5GNR}{fifth generation new radio}
\newacronym{6G}{6G}{sixth generation}
\newacronym{ADC}{ADC}{analog-to-digital converter}
\newacronym{ADR}{ADR}{antenna decentralized rate}
\newacronym{AER}{AER}{activity error rate}
\newacronym{AMP}{AMP}{approximate message passing}
\newacronym{ANN}{ANN}{approximate nearest neighbors}
\newacronym{AP}{AP}{access point}
\newacronym{ASB}{ASB}{adaptively scaled belief}
\newacronym{AUD}{AUD}{active user detection}
\newacronym{AWGN}{AWGN}{additive white Gaussian noise}
\newacronym{AoA}{AoA}{angle of arrival}
\newacronym{AoD}{AoD}{angle of departure}
\newacronym{BAd-VAMP}{BAd-VAMP}{bilinear adaptive VAMP}
\newacronym{BBI}{BBI}{Bayesian bilinear inference}
\newacronym{BER}{BER}{bit error rate}
\newacronym{BG}{BG}{Bernoulli-Gaussian}
\newacronym{BMMSE}{BMMSE}{Bussgang minimum mean square error}
\newacronym{BiGAMP}{BiGAMP}{bilinear generalized approximate message passing}
\newacronym{BIP}{BIP}{bilinear inference problem}
\newacronym{GAMP}{GAMP}{generalized approximate message passing}
\newacronym{GF}{GF}{grant-free}
\newacronym{BiGaBP}{BiGaBP}{bilinear Gaussian belief propagation}
\newacronym{BP}{BP}{belief propagation}
\newacronym{BS}{BS}{base station}
\newacronym{BLER}{BLER}{block error rate}
\newacronym{CAP}{CAP}{central AP}
\newacronym{CCU}{CCU}{central computing unit}
\newacronym{CDF}{CDF}{cumulative distribution function}
\newacronym{CE}{CE}{channel estimation}
\newacronym{CFO}{CFO}{carrier frequency offset}
\newacronym{CIR}{CIR}{channel impulse response}
\newacronym{CLT}{CLT}{central limit theorem}
\newacronym{CP}{CP}{cyclic prefix}
\newacronym{CPU}{CPU}{central processing unit}
\newacronym{CRLB}{CRLB}{Cram\'{e}r--Rao lower bound}
\newacronym{CSI}{CSI}{channel state information}
\newacronym{CSIDCO}{CSIDCO}{complex SIDCO}
\newacronym{CSt-SBL}{CSt-SBL}{complex \textit{t}-distribution-based SBL}
\newacronym{DL}{DL}{deep learning}
\newacronym{DNN}{DNN}{deep neural network}
\newacronym{DSP}{DSP}{digital signal processor}
\newacronym{DoA}{DoA}{direction of arrival}
\newacronym{eMBB}{eMBB}{enhanced mobile broadband}
\newacronym{ECF}{ECF}{estimate-compress-forward}
\newacronym{EP}{EP}{expectation propagation}
\newacronym{FA}{FA}{false alarm}
\newacronym{FFT}{FFT}{fast Fourier transform}
\newacronym{FFNN}{FFNN}{feed-forward neural network}
\newacronym{FN}{FN}{factor node}
\newacronym{FG}{FG}{factor graph}
\newacronym{GaBP}{GaBP}{Gaussian belief propagation}
\newacronym{GM}{GM}{Gaussian-mixture}
\newacronym{IC}{IC}{interference cancellation}
\newacronym{IDD}{IDD}{iterative detection and decoding}
\newacronym{IFFT}{IFFT}{inverse fast Fourier transform}
\newacronym{i.i.d.}{i.i.d.}{independent and identically distributed}
\newacronym{ICI}{ICI}{inter-sub-frequency interference}
\newacronym{ISAC}{ISAC}{integrated sensing and communication}
\newacronym{JACDE}{JACDE}{joint activity, channel and data estimation}
\newacronym{JACE}{JACE}{joint activity and channel estimation}
\newacronym{JCDE}{JCDE}{joint channel and data estimation}
\newacronym{JCCE}{JCCE}{joint channel and CFO estimation}
\newacronym{JCCDE}{JCCDE}{joint channel, CFO, and data estimation}
\newacronym{KL}{KL}{Kullback-Leibler}
\newacronym{LAMP}{LAMP}{learned AMP}
\newacronym{LSA}{LSA}{latent semantic analysis}
\newacronym{LoS}{LoS}{line-of-sight}
\newacronym{LLR}{LLR}{log-likelihood ratio}
\newacronym{LMMSE}{LMMSE}{linear minimum mean square error}
\newacronym{LASSO}{LASSO}{least absolute shrinkage and selection operator}
\newacronym{MAC}{MAC}{multiple-access channel}
\newacronym{MAE}{MAE}{mean absolute error}
\newacronym{MAP}{MAP}{maximum \textit{a posteriori}}
\newacronym{MCS}{MCS}{modulation and coding scheme}
\newacronym{MPDQ}{MPDQ}{message passing DQ}
\newacronym{MD}{MD}{miss-detection}
\newacronym{MF}{MF}{matched filter}
\newacronym{MFB}{MFB}{matched filter bound}
\newacronym{MNS}{MNS}{minimum norm solution}
\newacronym{MI}{MI}{mutual information}
\newacronym{mMIMO}{mMIMO}{massive multiple-input multiple-output}
\newacronym{MIMO}{MIMO}{multiple-input multiple-output}
\newacronym{MIMO-OFDM}{MIMO-OFDM}{multiple-input multiple-output orthogonal frequency-division multiplexing}
\newacronym{MU-MIMO}{MU-MIMO}{multi-user multiple-input multiple-output}
\newacronym{MU-MIMO-OFDM}{MU-MIMO-OFDM}{multi-user multiple-input multiple-output orthogonal frequency-division multiplexing}
\newacronym{OTFS}{OTFS}{orthogonal time frequency space}
\newacronym{MIMO-OTFS}{MIMO-OTFS}{multiple-input multiple-output orthogonal time frequency space}
\newacronym{OTFS-CE}{OTFS-CE}{orthogonal time frequency space channel estimation}
\newacronym{MU-MIMO-OTFS}{MU-MIMO-OTFS}{multi-user multiple-input multiple-output orthogonal time frequency space}
\newacronym{DD}{DD}{delay--Doppler}
\newacronym{FT}{FT}{frequency--time}
\newacronym{MSP}{MSP}{modified subspace pursuit}
\newacronym{OMP}{OMP}{orthogonal matching pursuit}
\newacronym{ML}{ML}{maximum likelihood}
\newacronym{MMSE}{MMSE}{minimum mean square error}
\newacronym{MMV-AMP}{MMV-AMP}{multiple measurement vector approximate message passing}
\newacronym{MMV}{MMV}{multiple measurement vector}
\newacronym{MSE}{MSE}{mean square error}
\newacronym{MP}{MP}{message passing}
\newacronym{MPA}{MPA}{message passing algorithm}
\newacronym{MRC}{MRC}{maximal ratio combining}
\newacronym{MUD}{MUD}{multi-user detection}
\newacronym{mMTC}{mMTC}{massive machine type communications}
\newacronym{mmWave}{mmWave}{millimeter-wave}
\newacronym{NR}{NR}{new radio}
\newacronym{NMSE}{NMSE}{normalized mean square error}
\newacronym{OFDM}{OFDM}{orthogonal frequency-division multiplexing}
\newacronym{OLLA}{OLLA}{outer loop link adaptation}
\newacronym{PBI}{PBI}{parametric bilinear inference}
\newacronym{PDA}{PDA}{probabilistic data association}
\newacronym{PDF}{PDF}{probability density function}
\newacronym{PE}{PE}{prediction error}
\newacronym{PMF}{PMF}{probability mass function}
\newacronym{PN}{PN}{phase noise}
\newacronym{PPP}{PPP}{Poisson point process}
\newacronym{PSK}{PSK}{phase-shift keying}
\newacronym{QP}{QP}{quadratic programming}
\newacronym{QPSK}{QPSK}{quadrature PSK}
\newacronym{QAM}{QAM}{quadrature amplitude modulation}
\newacronym{RB}{RB}{resource block}
\newacronym{ReLU}{ReLU}{rectified linear unit}
\newacronym{RMSE}{RMSE}{root mean squared error}
\newacronym{RF}{RF}{radio frequency}
\newacronym{RX}{RX}{receive}
\newacronym{SAGE}{SAGE}{space-alternating generalized expectation-maximization}
\newacronym{SBL}{SBL}{sparse Bayesian learning}
\newacronym{SD}{SD}{sphere decoding}
\newacronym{SIDCO}{SIDCO}{sequential iterative decorrelation via convex optimization}
\newacronym{SGA}{SGA}{scalar Gaussian approximation}
\newacronym{SGD}{SGD}{stochastic gradient descent}
\newacronym{S-GAMP}{S-GAMP}{structured generalized approximate message passing}
\newacronym{SIC}{SIC}{soft interference cancellation}
\newacronym{SID}{SID}{self-iterative detection}
\newacronym{SIMO}{SIMO}{single-input multiple-output}
\newacronym{SINR}{SINR}{signal-to-interference-plus-noise power ratio}
\newacronym{SNR}{SNR}{signal-to-noise ratio}
\newacronym{Soft IC}{Soft IC}{soft interference cancellation}
\newacronym{SotA}{SotA}{state-of-the-art}
\newacronym{SPA}{SPA}{sum-product algorithm}
\newacronym{SSR}{SSR}{sparse signal recovery}
\newacronym{SVD}{SVD}{singular value decomposition}
\newacronym{TB}{TB}{transport block}
\newacronym{TDL}{TDL}{tapped delay line}
\newacronym{T-GaBP}{T-GaBP}{trainable GaBP}
\newacronym{T-GAMP}{T-GAMP}{trainable GAMP}
\newacronym{TX}{TX}{transmit}
\newacronym{UE}{UE}{user equipment}
\newacronym{ULA}{ULA}{uniform linear array}
\newacronym{URA}{URA}{uniform rectangular array}
\newacronym{URLLC}{URLLC}{ultra reliable low latency communications}
\newacronym{VAMP}{VAMP}{vector AMP}
\newacronym{VDB}{VDB}{vector database}
\newacronym{VGA}{VGA}{vector Gaussian approximation}
\newacronym{VN}{VN}{variable node}
\newacronym{VSS}{VSS}{vector similarity search}
\newacronym{ZF}{ZF}{zero-forcing}
\newacronym{flops}{flops}{floating point operations}
\newacronym{CS}{CS}{compressed sensing}
\newacronym{LIP}{LIP}{linear inference problem}
\newacronym{w.r.t.}{w.r.t.}{with respect to}
\newacronym{ISI}{ISI}{internal symbol interference}
\newacronym{SE}{SE}{spectral efficiency}
\newacronym{DU}{DU}{deep unfolding}
\newacronym{SISO}{SISO}{single-input single-output}
\newacronym{SISO-OTFS}{SISO-OTFS}{single-input single-output orthogonal time frequency space}
\newacronym{RG}{RG}{row and group}
\newacronym{RVM}{RVM}{relevance vector machine}
\newacronym{Ct-SBL}{Ct-SBL}{complex \textit{t}-distribution-based SBL}
\newacronym{MM}{MM}{majorization-minimization}
\newacronym{SMV}{SMV}{single measurement vector}
\newacronym{BCS}{BCS}{Bayesian compressed sensing}
\newacronym{UAV}{UAV}{unmanned aerial vehicle}
\newacronym{DCS}{DCS}{deterministic compressed sensing}
\newacronym{CDSC}{CDSC}{continuous-Doppler-spread channel}
\newacronym{ECR}{ECR}{equivalent channel response}
\newacronym{FISTA}{FISTA}{fast iterative shrinkage thresholding algorithm}
\newacronym{SBEM}{SBEM}{spatial basis expansion model}
\newacronym{VCR}{VCR}{virtual channel representation}
\newacronym{BDCPM}{BDCPM}{beam domain channel power matrix}
\newacronym{BSCM}{BSCM}{beam-based statistical channel model}
\newacronym{ELBO}{ELBO}{relaxed evidence lower bound}
\newacronym{V2X}{V2X}{Vehicle to everything}
\newacronym{LDSC}{LDSC}{limited-Doppler-spread channel}
\newacronym{HBM}{HBM}{hierarchical Bayesian model}
\newacronym{DT}{DT}{digital twin}
\newacronym{EM}{EM}{electromagnetic}
\newacronym{ROI}{ROI}{region of interest}
\newacronym{CPR}{CPR}{constitutive parameter reconstruction}
\newacronym{LSM}{LSM}{linear sampling method}
\newacronym{EII}{EII}{electromagnetic inverse imaging}
\newacronym{FDTD}{FDTD}{finite-difference time-domain}
\newacronym{LS}{LS}{Lippmann-Schwinger}
\newacronym{BIM}{BIM}{Born iterative method}
\newacronym{ASR}{ASR}{actual scatterer region}
\newacronym{NCC}{NCC}{normalized cross-correlation}
\newacronym{UCA}{UCA}{uniform circular antenna array}
\newacronym{WSS}{WSS}{wide-sense stationary}
\newacronym{FIM}{FIM}{Fisher information matrix}
\newacronym{3D}{3D}{three-dimensional}
\newacronym{IoT}{IoT}{Internet of Things}

\begin{document}

\title{
Ill-Posedness Analysis of CSI-Based Electromagnetic Inverse Scattering for Material Reconstruction in ISAC Systems
}

\author{ 
   Yubin Luo, Li Yu, \IEEEmembership{Member, IEEE}, Takumi Takahashi, \IEEEmembership{Member, IEEE}, Shaoyi Liu, Yuxiang Zhang, \IEEEmembership{Member, IEEE}, Jianhua Zhang, \IEEEmembership{Fellow, IEEE}, and Hideki Ochiai, \IEEEmembership{Fellow, IEEE}
    \thanks{Y. Luo, T. Takahashi and H. Ochiai are with the Graduate School of Engineering, The University of Osaka, 2-1-Yamada-oka, Suita, 565-0871, Japan. (e-mail: yubinluo0@gmail.com, takahashi@comm.eng.osaka-u.ac.jp, ochiai@comm.eng.osaka-u.ac.jp.)}
    \thanks{L. Yu, S. Liu, Y. Zhang, and J. Zhang are with the State Key Laboratory of Networking and Switching Technology, Beijing University of Posts and Telecommunications, Beijing 100876, China. (e-mail: li.yu@bupt.edu.cn; sy\_liu@bupt.edu.cn; zhangyx@bupt.edu.cn; jhzhang@bupt.edu.cn.)}
}

% The paper headers
%\markboth{Journal of \LaTeX\ Class Files,~Vol.~14, No.~8, August~2015}%
%{Shell \MakeLowercase{\textit{et al.}}: Bare Demo of IEEEtran.cls for IEEE Journals}
% The only time the second header will appear is for the odd numbered pages

% make the title area
\maketitle

% As a general rule, do not put math, special symbols or citations
% in the abstract or keywords.
\begin{abstract}
\Ac{CSI}-based electromagnetic inverse scattering for material reconstruction in \ac{ISAC} systems offers a physics-grounded route to radio-environment awareness.
However, the \ac{CSI}-induced sensing operator is often ill-conditioned. 
This paper develops an operator-centric analysis by exposing the factorized \ac{ISAC} scattering matrix shaped by in-domain scattering responses and transceiver-side propagation channels. 
We identify the dominant source of near-rank deficiency as the background-related columns, which are highly coherent and nearly linearly dependent.
We further show that scatterer-related columns remain informative, whereas scatterer-column coherence decreases as the number of probing frequencies increases, thereby improving the effective rank of the sensing operator. 
A Hilbert-space interpretation generalizes this matrix-level insight and attributes the background--scatterer contrast to the structural difference between propagation-only and
material-induced scattering responses.
Building on this characterization, we derive a closed-form relation linking \ac{ROI} restriction, \ac{ROI} mismatch, effective coherence, and condition-number reduction, which explains why \ac{ROI}-restricted inversion stabilizes \ac{CSI}-based material reconstruction.
Guided by the analysis, a \ac{LSM}-initialized \ac{ROI}-constrained \ac{QP} formulation validates the predicted conditioning gains in a reduced subspace.
Finite-difference time-domain simulations over different target geometries and \acp{SNR} confirm the theoretical trends, including improved conditioning, reduced complexity, and robustness against representative inverse-scattering baselines.
\end{abstract}

% Note that keywords are not normally used for peerreview papers.
\begin{IEEEkeywords}
Electromagnetic inverse scattering, ill-posedness analysis, material reconstruction, integrated sensing and communication.
\end{IEEEkeywords}
\vspace{-3mm}

% For peer review papers, you can put extra information on the cover
% page as needed:
 \ifCLASSOPTIONpeerreview
 \begin{center} \bfseries EDICS Category: 3-BBND \end{center}
 \fi
%
% For peerreview papers, this IEEEtran command inserts a page break and
% creates the second title. It will be ignored for other modes.
\glsresetall
\IEEEpeerreviewmaketitle

\section{Introduction}

\Ac{ISAC} enables sensing and data transmission over the same waveform, spectrum, and hardware platform~\cite{b1}.
By reusing in-band communication signals and \ac{RF} chains, it provides time-synchronized and periodic observations with high spectral efficiency and low system cost~\cite{b2,b3}.
These capabilities make \ac{ISAC} a key enabling technology for \ac{6G}, supporting diverse applications such as multi-device \ac{IoT}, vehicular networks, cooperative access, environment-aware beam management, and wireless \acp{DT}~\cite{b4,b46}.

A common requirement across these \ac{ISAC} applications is reliable radio-environment awareness.
Accurate channel modeling is therefore a fundamental prerequisite for environment-aware \ac{ISAC} operation.
Deterministic channel modeling, such as ray tracing and full-wave simulation, provides a physics-consistent way to represent such environments~\cite{b5,b6}.
Beyond multipath geometry, these models crucially depend on the constitutive parameters of dominant scatterers, which directly govern the electromagnetic channel response.
Accordingly, \ac{CPR} becomes an essential physical-layer enabler for environment-aware \ac{ISAC} applications in \ac{6G} systems.

A variety of \ac{CPR} methodologies have been investigated in science and
engineering. On one end, direct \ac{EM} characterization with specialized
instruments, such as waveguides and resonators, can achieve high accuracy
but typically requires dedicated setups and offline measurements
~\cite{b8,b9,b10}. On the other end, multimodal and data-driven approaches
leverage auxiliary sensing modalities and training data to infer material
parameters at scale~\cite{b11,b12,b13,b14,b15,b16}. While effective in
controlled settings, these approaches are not always compatible with
large-scale \ac{6G} deployment, where wireless systems are expected to rely
on communication-native and periodically updated observations. This
motivates a physics-grounded route, namely \ac{EII}, which formulates
\ac{CPR} as an inverse scattering problem governed by Maxwell's equations
and integral operators~\cite{b17,b18,b19}. 

In the frequency domain, the wireless multipath channel can be viewed as a parametric surrogate of Green's-function-based propagation~\cite{b20}. Therefore, in \ac{ISAC} systems, our forward model starts from the communication link and treats the resulting \ac{CSI} as an intrinsic sensing observation. It serves as a surrogate for conventional scattered-field measurements and yields a physics-based mapping from constitutive parameters to the channel observations~\cite{b21,b22,b23}. In this work, we focus on the sensing side of \ac{ISAC}, where the environment is inferred from communication-generated observations, and aim to provide an operator-centric theoretical characterization of ill-posedness under this pilot-aided \ac{CSI}-driven observation mechanism. Fig.~\ref{fig:isac} illustrates the considered ISAC-enabled sensing scenario.

\begin{figure}[t]
  \centering
  \includegraphics[width=0.9\columnwidth]{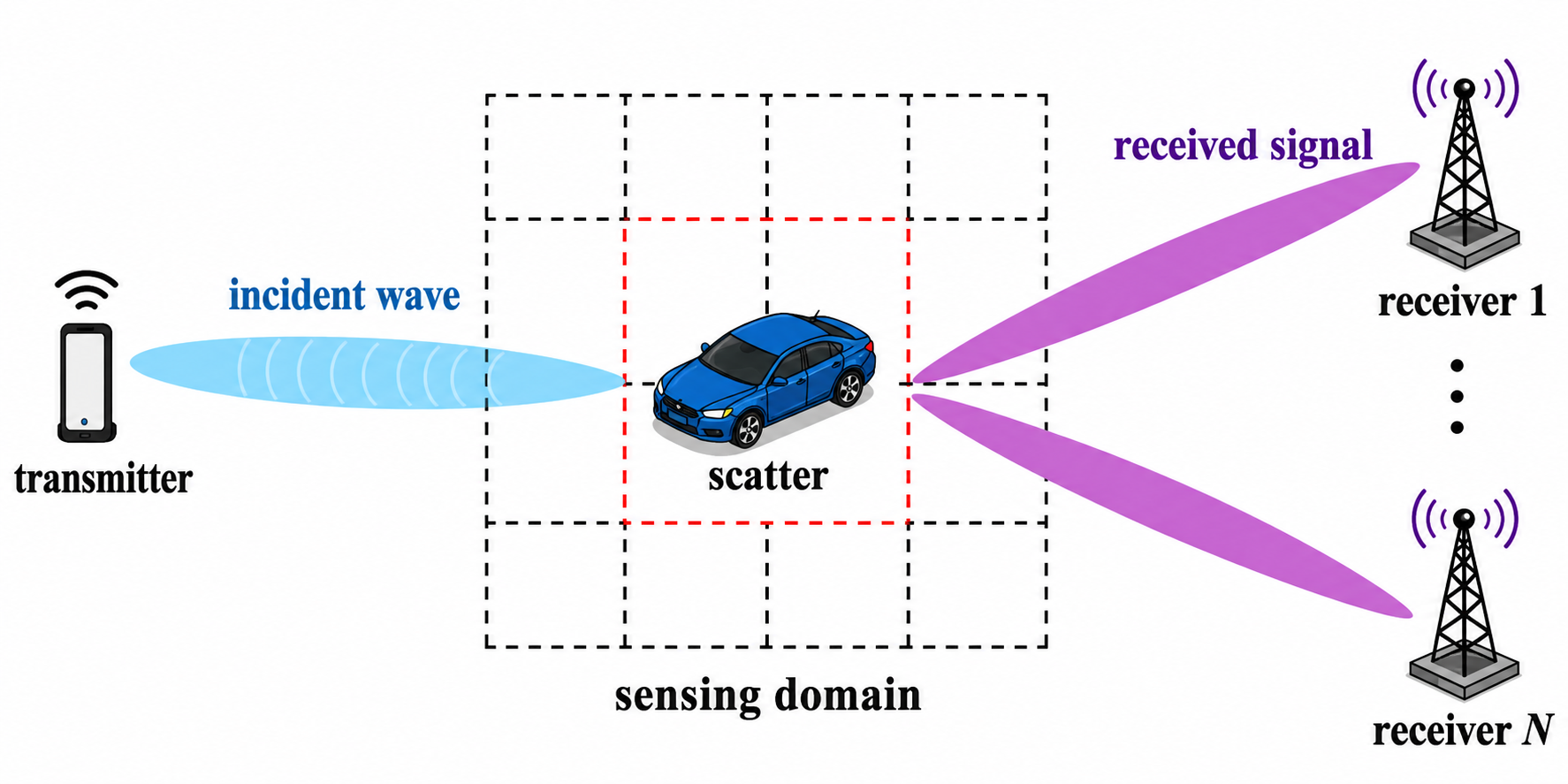}
  \caption{Schematic diagram of the ISAC system.}
  \label{fig:isac}
  \vspace{3mm}
\end{figure}

\ac{ISAC}-enabled \ac{EII} still faces structural limitations, and we highlight two closely related issues.
First, \ac{DT}-oriented deterministic channel modeling demands high-precision \ac{CPR}, yet inverse-scattering-based \ac{CPR} is inherently often ill-conditioned.
Small singular values of the forward scattering matrix that maps the scene contrast to the measured \ac{CSI} indicate severe ill-conditioning, leading to high sensitivity to measurement perturbations and pronounced noise amplification~\cite{b24,b25}.
Second, migrating from conventional scattered-field measurements to \ac{ISAC}-generated \ac{CSI} fundamentally reshapes the forward model. Unlike classical \ac{EII} settings, the \ac{CSI} is jointly shaped by waveform-dependent pilots and the transmitter- and receiver-side propagation channels, resulting in a structured and factorized scattering operator.

The importance of mitigating ill-conditioning has been well recognized under classical scattered-field measurements, and a broad set of techniques---including regularization, multi-frequency, and multi-view strategies---have been developed~\cite{b12,b13}.
\Ac{ROI} constraints and sampling-based methods have also been employed as practical means to reduce the size of the problem space, thereby enhancing robustness~\cite{b45}.
Nevertheless, it remains unclear how the \ac{ISAC}-induced scattering operator alters the ill-posedness analysis in \ac{CSI}-driven \ac{CPR}, and a unified operator-centric characterization is still limited.
In particular, quantitative operator-level studies tailored to \ac{ISAC} scenarios remain limited. Existing works do not always (i) pinpoint the dominant source of ill-conditioning, (ii) convert the analysis into provable conditioning gains, or (iii) leverage these insights to guide targeted algorithm design that mitigates ill-conditioning in \ac{CPR}.
Moreover, \Ac{ROI} selection is an intrinsic modeling choice and, when mismatched, can affect both reconstruction accuracy and computational complexity.

Motivated by these gaps, we investigate \ac{CSI}-based \ac{CPR} in \ac{ISAC} systems under a \ac{MIMO} setting and characterize how the structure of the \ac{ISAC}-induced operator governs the ill-posedness of the associated inverse scattering problem.
Our main contributions are as follows:
\begin{itemize}
\item

We characterize the electromagnetic measurement operator in \ac{ISAC} systems by revealing its channel-shaped factorization into propagation channels and a scattering-response operator.
Building on this structure, we analyze an operator-centric ill-posedness: the systematic ill-conditioning is dominated by highly coherent columns of the scattering matrix associated with the background (air) region, whereas columns corresponding to the actual scatterer region remain comparatively informative due to their weak inter-column coherence, which further decreases with the number of probing frequencies.
Based on this analysis, we show that \ac{ROI} restriction can alleviate ill-posedness by suppressing the redundancy-dominant background subspace.

\item
We extend the discrete column-coherence analysis to a Hilbert-space operator interpretation, revealing that the background--\ac{ASR} distinction stems from the different mechanisms of propagation-only background responses and material-induced scattering responses. 
This extension shows that the proposed ill-posedness mechanism is not tied to a particular discretization but persists at the operator level. We further derive a closed-form condition-number relation that connects \ac{ROI} restriction, \ac{ROI} mismatch, and effective coherence.
\item
To operationalize and empirically validate the above operator-level analysis, we use an ROI-restricted \ac{QP} update as a numerical vehicle: the \ac{LSM} yields a coarse support estimate to define the \ac{ROI}, and the subsequent QP is solved only on the reduced subspace.
\end{itemize}

The remainder of this paper is organized as follows.
Section~II introduces the \ac{CSI}-based forward model and formulates the corresponding \ac{CPR} problem.
Section~III provides an operator-centric characterization by revealing the channel-shaped factorization of the forward scattering operator and identifying the dominant mechanisms behind ill-conditioning.
Section~IV presents an \ac{ROI}-restricted reconstruction framework to mitigate ill-posedness, together with a conditioning analysis that relates \ac{ROI} restriction.
Section~V reports numerical results, and Section~VI concludes the paper.

\textbf{Notation:} Throughout the paper, \textbf{boldface} symbols denote vectors or matrices; 
\(\mathrm{j}\) represents the imaginary unit; 
\((\cdot)^{\mathrm{H}}\) and \((\cdot)^{\mathrm{T}}\) denote the Hermitian (conjugate transpose) and transpose operations, respectively; $(\cdot)^{*}$ denotes the complex conjugate of a scalar; \(\langle\cdot,\cdot\rangle\) denotes the inner product; $\|\cdot\|_F$ denotes the Frobenius norm; 
\(\|\cdot\|_2\) denotes the Euclidean norm; \(\circ\) stands for the Khatri--Rao product; \(\mathrm{diag}(\mathbf{x})\) denotes a diagonal matrix whose diagonal entries are formed from the elements of vector \(\mathbf{x}\); 
\(\mathbf{A}_{[:,i]}\) indicates the \(i\)-th column of matrix \(\mathbf{A}\); 
\(\mathbf{I}\) denotes the identity matrix; 
\(\mathbf{A} \succeq 0\) indicates that matrix \(\mathbf{A}\) is positive semi-definite.

%\begin{comment}

%\end{comment}

\vspace{-0.5em}
\section{Basic System Model and Problem Formulation}

In this section, we start from a standard baseband channel model and then derive a physics-consistent closed-form expression of the corresponding \ac{EM} scattering channel from electromagnetic field theory~\cite{b21}.
To reduce the computational complexity of both the forward and inverse problems and to enable efficient repeated solutions, we adopt a two-dimensional transverse magnetic with respect to the $z$-axis (TM$_z$)  with time dependence $e^{\mathrm{j}\omega t}$~\cite{b31}.
Let $D\subset\mathbb{R}^2$ denote the sensing domain and let $\Gamma$ denote the observation curve.

Consider a point-to-point \ac{MIMO} link with an $N_t$-element Tx array and an $N_r$-element Rx array.
The proposed input--output formulation is also applicable to distributed deployments, in which the $N_t$ transmitters and $N_r$ receivers are geographically separated single-antenna nodes that are synchronized in time and frequency. Additionally, we assume that all antenna elements are located on an observation curve $\Gamma$ in the far field of the imaging domain $D$.

The system operates over $K$ probing frequencies:
\begin{equation}
f_k \triangleq f_c + \left(k-\frac{K+1}{2}\right)\Delta f,\qquad k=1,\ldots,K,
\label{eq:freq_grid}
\end{equation}
where $f_c$ and $\Delta f$ denote the center frequency and the frequency spacing, respectively.
At each frequency $f_k$, the Tx sends $T$ pilot symbols (time slots) to probe the channel.
In slot $t$, a known pilot excitation $\mathbf x_{k,t}\in\mathbb C^{N_t\times 1}$ produces the received observation
$\mathbf y_{k,t}\in\mathbb C^{N_r\times 1}$ according to:
\begin{equation}
\mathbf y_{k,t}=\mathbf H_k\,\mathbf x_{k,t}+\mathbf n_{k,t},
\qquad
\mathbf H_k\in\mathbb C^{N_r\times N_t},
\qquad
\mathbf n_{k,t}\in\mathbb C^{N_r\times 1},
\label{eq:pilot_io_per_slot}
\end{equation}
where $\mathbf H_k$ denotes the end-to-end channel matrix at tone $k$.
Our goal is to express $\mathbf H_k$ in closed form as a function of the unknown material contrast so that inverse-scattering-based \ac{CPR} can be carried out directly from the pilot-based \ac{CSI}.

\vspace{-1.2em}
\subsection{End-to-End Scattering Channel}
For the \(k\)-th frequency, with angular frequency
\(\omega_k=2\pi f_k\), the total field inside the investigation domain
\(D\) satisfies the \ac{LS} integral equation~\cite{b26,b27,b28}
\begin{equation}
E_k^{t}(\mathbf r_d)
=
E_k^{i}(\mathbf r_d)
+
\int_D
G_k(\mathbf r_d,\mathbf r')\,
\chi(\mathbf r';\omega_k)\,
E_k^{t}(\mathbf r')\,\mathrm d\mathbf r',
\label{eq:et}
\end{equation}
where \(E_k^{t}\) and \(E_k^{i}\) denote the total and incident fields,
respectively, \(\mathbf r_d\in D\) is the investigated point, and
\(\mathbf r'\in D\) is the integration point. The two-dimensional Green's
function is
\begin{equation}
G_k(\mathbf r,\mathbf r')
\triangleq
-\frac{\mathrm j}{4} k_b^2(\omega_k)
H_0^{(2)}\!\left(k_b(\omega_k)\|\mathbf r-\mathbf r'\|\right),
\end{equation}
where \(H_0^{(2)}(\cdot)\) is the zeroth-order Hankel function of the
second kind, \(k_b(\omega_k)=\omega_k/c\), and \(c\) is the speed of
light~\cite{b29,b30}. The material contrast is defined as
\begin{equation}
\chi(\mathbf r;\omega_k)
\triangleq
\varepsilon(\mathbf r)-1
+
\mathrm j
\frac{\sigma(\mathbf r)}{\varepsilon_0\omega_k},
\label{eq:contrast}
\end{equation}
where \(\varepsilon(\mathbf r)\) is the relative permittivity,
\(\sigma(\mathbf r)\) is the conductivity, and \(\varepsilon_0\) is the
vacuum permittivity. Since the material parameters are assumed to vary
slowly over the operating band, we use
\(\chi(\mathbf r;\omega_k)\approx\chi(\mathbf r;\omega_c)\) for all \(k\).

After discretizing \(D\) into \(N\) sampling points, \eqref{eq:et}
becomes
\begin{equation}
\mathbf E_k^{t,D}
=
\mathbf E_k^{i,D}
+
\mathbf G_k^D
\operatorname{diag}(\boldsymbol\chi)
\mathbf E_k^{t,D},
\label{eq:et_matrix}
\end{equation}
where
\(\mathbf E_k^{t,D},\mathbf E_k^{i,D}\in\mathbb C^{N\times N_t}\)
are the total and incident field matrices over \(D\),
\(\boldsymbol\chi\in\mathbb C^{N\times 1}\) is the discretized contrast
vector, and \(\mathbf G_k^D\in\mathbb C^{N\times N}\) is the in-domain
Green's matrix. The induced scattered field inside \(D\) is denoted by
\(\mathbf E_k^{s,D}\in\mathbb C^{N\times N_t}\), with
\(\mathbf E_k^{t,D}=\mathbf E_k^{i,D}+\mathbf E_k^{s,D}\).

Let \(\mathbf H_{1,k}\in\mathbb C^{N\times N_t}\) and
\(\mathbf H_{2,k}\in\mathbb C^{N_r\times N}\) denote the propagation
operators from the transmitter to \(D\) and from \(D\) to the receiver
array on \(\Gamma\), respectively. For the \(t\)-th pilot slot at
frequency \(f_k\), the transmit pilot is
\(\mathbf x_{k,t}\in\mathbb C^{N_t\times 1}\), and the incident field
over \(D\) is
\begin{equation}
\mathbf E_{k,t}^{i,D}
=
\mathbf H_{1,k}\mathbf x_{k,t}.
\end{equation}
Combining \eqref{eq:et_matrix} with the propagation from \(D\) to
\(\Gamma\), the scattered field observed at the receiver side is
\begin{equation}
\mathbf E_{k,t}^{s,\Gamma}
=
\mathbf H_{2,k}
\operatorname{diag}(\boldsymbol\chi)
\Big(
\mathbf I_N
-
\mathbf G_k^D\operatorname{diag}(\boldsymbol\chi)
\Big)^{-1}
\mathbf H_{1,k}\mathbf x_{k,t},
\label{eq:channel}
\end{equation}
where
\(\mathbf E_{k,t}^{s,\Gamma}\in\mathbb C^{N_r\times 1}\) denotes the
scattered field on \(\Gamma\)~\cite{b21}. In free space,
\(\mathbf H_{1,k}\) and \(\mathbf H_{2,k}\) reduce to the corresponding
Green's-function propagation matrices. Throughout this paper, we consider
nominal coherently calibrated and phase-tracked propagation operators,
which are generated by simulation. This is consistent with coherent
\ac{MIMO} imaging, where array calibration and pilot-aided phase-noise
tracking are commonly adopted~\cite{b34,b35,b36}.

The received observation at the \(t\)-th pilot slot is therefore written as
\begin{equation}
\mathbf y_{k,t}
=
\mathbf E_{k,t}^{s,\Gamma}
+
\mathbf n_{k,t},
\qquad
\mathbf y_{k,t},\mathbf n_{k,t}\in\mathbb C^{N_r\times 1},
\label{eq:pilot_esca_per_slot}
\end{equation}
where \(\mathbf n_{k,t}\) denotes receiver noise. Stacking
\(\{\mathbf y_{k,t}\}_{t=1}^{T}\) gives
\begin{equation}
\mathbf y_k
\triangleq
\big[
\mathbf y_{k,1}^{\mathrm T},
\ldots,
\mathbf y_{k,T}^{\mathrm T}
\big]^{\mathrm T}
=
\mathbf A_k\boldsymbol\chi+\mathbf n_k,
\label{eq:yk}
\end{equation}
where
\(\mathbf y_k,\mathbf n_k\in\mathbb C^{TN_r\times 1}\). Defining
\(\mathbf X_k\triangleq
[\mathbf x_{k,1},\ldots,\mathbf x_{k,T}]
\in\mathbb C^{N_t\times T}\), the forward operator
\(\mathbf A_k\in\mathbb C^{TN_r\times N}\) is given by~\cite{b21}
\begin{equation}
\mathbf A_k
\triangleq
\left(
\mathbf X_k^{\mathrm T}
\mathbf H_{1,k}^{\mathrm T}
\left[
\left(
\mathbf I_N
-
\mathbf G_k^D\operatorname{diag}(\boldsymbol\chi)
\right)^{-\mathrm T}
\right]
\right)
\circ
\mathbf H_{2,k},
\label{eq:Ak}
\end{equation}
where \(\circ\) denotes the column-wise Khatri--Rao product, which maps
the factors of size \(T\times N\) and \(N_r\times N\) to
\(\mathbb C^{TN_r\times N}\).

Since \(\mathbf A_k\) depends on \(\boldsymbol\chi\), \eqref{eq:yk} is
nonlinear. We solve this problem using the \ac{BIM}. At the first
iteration, the Born approximation
\(\mathbf E_k^{t,D}\approx\mathbf E_k^{i,D}\) yields
\begin{equation}
\mathbf y_k
\approx
\mathbf A_k^{(1)}\boldsymbol\chi+\mathbf n_k,
\qquad
\mathbf A_k^{(1)}
=
\big(
\mathbf X_k^{\mathrm T}\mathbf H_{1,k}^{\mathrm T}
\big)
\circ
\mathbf H_{2,k}.
\label{eq:born_first_operator}
\end{equation}
For iteration \(n\ge2\), \(\mathbf A_k^{(n)}\) is updated by substituting
the previous estimate \(\boldsymbol\chi^{(n-1)}\) into \eqref{eq:Ak}.
The next estimate \(\boldsymbol\chi^{(n)}\) is then obtained from the
regularized linear inverse problem
\(\mathbf y_k\approx\mathbf A_k^{(n)}\boldsymbol\chi+\mathbf n_k\)
until convergence~\cite{b32,b33}. The detailed reconstruction algorithm
is presented in Section~IV.

\vspace{-0.8em}
\subsection{Linear Sampling Method}
We briefly recall the \ac{LSM}~\cite{b37, b41}, which is used only to obtain
a coarse scatterer support and initialize the \ac{ROI} for the subsequent
Born-iterative quantitative reconstruction.

For frequency $f_k$, the stacked measurement
$\mathbf y_k\in\mathbb C^{TN_r\times 1}$ is given by \eqref{eq:yk}.
Under the quasi-static assumption within one sensing snapshot, the receive
response induced by a unit excitation at the $m$-th transmitter is denoted
by $\mathbf h_{k,m}\in\mathbb C^{N_r\times 1}$. The $t$-th pilot-slot
observation can then be expressed as
\begin{equation}
\mathbf y_{k,t}
=
\sum_{m=1}^{N_t}
x_{k,t}(m)\mathbf h_{k,m}
+
\mathbf n_{k,t},
\label{eq:yk_t_superpos_lsm}
\end{equation}
where $x_{k,t}(m)$ is the $m$-th entry of $\mathbf x_{k,t}$.

To match the stacking of $\mathbf y_k$, we define the slot-stacked
response
\begin{equation}
\mathbf h_{k,m}^{\mathrm{stk}}
\triangleq
\big[
x_{k,1}(m)\mathbf h_{k,m}^{\mathrm T},
\ldots,
x_{k,T}(m)\mathbf h_{k,m}^{\mathrm T}
\big]^{\mathrm T}
\in\mathbb C^{TN_r\times 1},
\label{eq:hkm_stack_lsm}
\end{equation}
where the pilot weights are inherited from the slot-wise observation in
\eqref{eq:yk_t_superpos_lsm}. The multi-static response matrix is then
\begin{equation}
\mathbf U_k
\triangleq
\big[
\mathbf h_{k,1}^{\mathrm{stk}},
\ldots,
\mathbf h_{k,N_t}^{\mathrm{stk}}
\big]
\in\mathbb C^{TN_r\times N_t}.
\label{eq:Uk_def_lsm}
\end{equation}

For each test point in $D$, \ac{LSM} seeks a coefficient vector whose
multi-static response matches the corresponding scatterer-to-Rx Green's
response. In matrix form, this is written as
\begin{equation}
\mathbf U_k\mathbf c_k
\approx
\mathbf G_{2,k}^{\Gamma},
\label{eq:lsm_sys}
\end{equation}
where $\mathbf c_k\in\mathbb C^{N_t\times N}$ is the \ac{LSM} coefficient
matrix, and $\mathbf G_{2,k}^{\Gamma}\in\mathbb C^{TN_r\times N}$ is the
time-slot-stacked scatterer-to-Rx Green's-function matrix. Since the
Green's function is invariant over the $T$ pilot slots within one snapshot,
$\mathbf G_{2,k}^{\Gamma}$ is formed by stacking the same receive Green's
response across the $T$ slots.

In practice, $\mathbf c_k$ is obtained by Tikhonov regularization~\cite{b40}:
\begin{equation}
\mathbf c_k
=
\arg\min_{\mathbf c}
\left\|
\mathbf U_k\mathbf c
-
\mathbf G_{2,k}^{\Gamma}
\right\|_F^2
+
\zeta\|\mathbf c\|_F^2,
\label{eq:lsm}
\end{equation}
where $\zeta>0$ is the regularization parameter. After computing
$\{\mathbf c_k\}_{k=1}^{K}$, the multi-frequency indicator is constructed
as~\cite{b41}
\begin{equation}
\mathcal J_K(\mathbf r)
=
\frac{1}{K}
\sum_{k=1}^{K}
\log_{10}
\left(
\|\mathbf c_k(\mathbf r)\|_2^2
\right),
\label{eq:lsm_indicator}
\end{equation}
where $\mathbf c_k(\mathbf r)\in\mathbb C^{N_t}$ denotes the column of
$\mathbf c_k$ associated with the test point $\mathbf r$. The resulting
indicator is subsequently thresholded to construct the \ac{ROI}, while the
quantitative contrast reconstruction is performed by the Born-iterative
method introduced in Section~IV.

\section{Ill-Posedness Analysis of the ISAC-CSI Operator}

\subsection{Operator Structure and Coherence Metric}

As discussed in Section~II-A, the estimation of material contrast from \ac{ISAC} observations leads to a linear inverse problem of the form~\eqref{eq:yk}.
In practice, this inverse problem is often severely ill-conditioned, which manifests as pronounced sensitivity to noise and modeling mismatch.
A standard quantitative measure of this ill-posedness is the condition number of the forward operator $\mathbf{A}_k$~\cite{b42}:
\begin{equation}
\kappa(\mathbf{A}_k)\triangleq\frac{\sigma_{\max}(\mathbf{A}_k)}{\sigma_{\min}(\mathbf{A}_k)},
\end{equation}
where $\sigma_{\max}(\mathbf{A}_k)$ and $\sigma_{\min}(\mathbf{A}_k)$ denote the largest and smallest singular values of $\mathbf{A}_k$, respectively.
In the ill-conditioned regime, the rapid decay of the singular values drives $\sigma_{\min}(\mathbf{A}_k) \approx 0$, rendering $\mathbf{A}_k$ nearly rank-deficient and the inverse problem highly unstable.

A key distinction is that $\mathbf{A}_k$ is channel-shaped: it is formed from \ac{CSI} extracted from the communication link rather than from direct scattered-field measurements.
Although the formulation ultimately traces back to the Lippmann--Schwinger equation, the intermediate \ac{CSI} extraction process fundamentally reshapes the forward operator.
Specifically, after pilot-assisted \ac{CSI} extraction with a fixed waveform, the segmented Tx--scatterer--Rx propagation inherent to the wireless channel induces a channelized operator structure, where each column not only encodes the sensitivity to a local contrast perturbation but is also predominantly shaped by propagation and in-domain scattering coupling.

This structure becomes explicit by examining individual columns of $\mathbf{A}_k$.
Let $\mathbf a_{k,j}$ denote the $j$-th column of $\mathbf A_k$.
Under the \ac{CSI}-based observation model, $\mathbf a_{k,j}$ takes a Khatri--Rao product form:
\begin{equation}
\mathbf a_{k,j} = (\mathbf A_k)_{[:,j]} = \mathbf u_j \circ \mathbf v_j,
\label{eq:col_decomp_main}
\end{equation}
where
\begin{equation}\label{eq:uv_def_main}
\begin{aligned}
\mathbf u_j &\triangleq
\Big(
\mathbf X_k^{\mathrm T}\mathbf H_{1,k}^{\mathrm T}
(\mathbf I-\mathbf G_k^{D}\mathrm{diag}(\boldsymbol\chi))^{-{\mathrm T}}
\Big)_{[:,j]},\\
\mathbf v_j &\triangleq (\mathbf H_{2,k})_{[:,j]}.
\end{aligned}
\end{equation}

Note that $\mathbf u_j\in\mathbb C^{T\times 1}$ is the slot-domain Tx-side signature associated with pixel $j$ after pilot mixing and in-domain coupling, while $\mathbf v_j\in\mathbb C^{N_r\times 1}$ captures the Rx-side propagation.

From a channel interpretation, $\mathbf u_j$ represents the pilot-induced total-field response associated with pixel $j$, encompassing transmitter-side waveform encoding, Tx-array-to-pixel propagation, and in-domain multiple-scattering coupling.
As a result, correlations among $\{\mathbf u_j\}$ are jointly shaped by the pilot structure, propagation characteristics, and the contrast distribution.
In contrast, $\{\mathbf v_j\}$ depends solely on pixel-to-receiver propagation and the receive-array manifold and therefore admits a simpler, geometry-driven interpretation.

The Khatri--Rao structure in~\eqref{eq:col_decomp_main} implies that the conditioning properties of $\mathbf A_k$ are governed by the joint correlation behavior of $\{\mathbf u_j\}$ and $\{\mathbf v_j\}$.
To quantify this effect, we characterize the inter-column dependence using the \ac{NCC}:
\begin{equation}
\mu_{ij} \triangleq
\frac{\big|\langle \mathbf a_{k,i},\,\mathbf a_{k,j}\rangle\big|}
     {\|\mathbf a_{k,i}\|_2\,\|\mathbf a_{k,j}\|_2} \in [0,1],
\label{eq:ncc_def}
\end{equation}
which serves as a direct proxy for column coherence and, consequently, for the degree of ill-conditioning.

In this work, we focus exclusively on the propagation-channel-induced structure of the ISAC scattering operator and accordingly fix the pilot configuration throughout the analysis. The impact of pilot design on CPR performance has been studied in~\cite{b22}.

With respect to propagation effects, we adopt a conventional channel model as a neutral baseline.
For each probing tone $k$, the channel is assumed quasi-static within one sensing snapshot and to exhibit approximately stationary spatial statistics over the aperture.
Across tones, we allow weak frequency coherence and only assume that the cross-tone correlation is small in a second-order sense.
These assumptions aim to capture generic propagation behavior without injecting site-specific structure that could bias the column-correlation analysis.
They are typically reasonable under standard coherence conditions: the quasi-static approximation holds when the snapshot duration is shorter than the channel coherence time, while the inter-tone correlation becomes weak when the tone spacing $\Delta f$ is comparable to or larger than the coherence bandwidth~\cite{b38,b39}.

Under this controlled setting, we show that the contrast between the background (air) and \ac{ASR} governs the inter-column correlations of the forward operator and ultimately drives the observed ill-posedness.

\vspace{-1em}
\subsection{Air-Region Columns: Strong Coherence}

\begin{lemma}
\label{lem:air_cols_dep}
Columns of $\mathbf A_k$ associated with the background (air) region are highly coherent and thus approximately linearly dependent.
\end{lemma}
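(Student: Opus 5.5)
The plan is to exploit the Khatri--Rao factorization \eqref{eq:col_decomp_main} and the fact that, for a background pixel $j$ (with $\chi_j=0$), the column depends on $\boldsymbol\chi$ only through in-domain coupling. The key observation is that in \eqref{eq:Ak} the column is multiplied by the contrast only when it enters $\mathbf y_k$. The column itself, however, is the Tx-side total field at pixel $j$ tensored with the Rx-side propagation vector. For an air pixel, $\mathbf u_j$ equals the incident field $(\mathbf X_k^{\mathrm T}\mathbf H_{1,k}^{\mathrm T})_{[:,j]}$ plus a scattered correction. That correction is generated only by the scatterer pixels, so it is a smooth field in the air region. I would therefore first write, for air pixels $i,j$,
\[
\langle \mathbf a_{k,i},\mathbf a_{k,j}\rangle=\langle \mathbf u_i,\mathbf u_j\rangle\,\langle \mathbf v_i,\mathbf v_j\rangle ,
\]
so that $\mu_{ij}=\mu^{u}_{ij}\,\mu^{v}_{ij}$ factorizes into Tx-side and Rx-side coherences. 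It then suffices to show that both factors are close to $1$ for air pixels.

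Next I would bound each factor using the far-field assumption (antennas on $\Gamma$ in the far field of $D$). For the Rx side, $(\mathbf v_j)_r\propto H_0^{(2)}(k_b\|\mathbf r_r-\mathbf r_j\|)$. Its far-field asymptotics give $(\mathbf v_j)_r\approx C_r\,e^{\mathrm j k_b \hat{\mathbf r}_r\cdot\mathbf r_j}$, where $C_r$ carries the common pixel-independent amplitude and phase. Hence
\[
\mu^{v}_{ij}\approx\Big|\tfrac{1}{N_r}\textstyle\sum_r w_r e^{\mathrm j k_b\hat{\mathbf r}_r\cdot(\mathbf r_j-\mathbf r_i)}\Big| ,
\]
which is an array-factor-type quantity. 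It is close to $1$ whenever $k_b\|\mathbf r_i-\mathbf r_j\|\,\Delta\theta_{\mathrm{ap}}\ll 1$, that is, when the pixel separation is below the aperture-limited resolution. It also stays large on average because the observation aperture spans only a limited angular sector with finitely many elements. The Tx side is treated the same way for the incident part, with the fixed pilot matrix $\mathbf X_k$ acting as a common linear map. I would bound its effect on $\mu^u$ through the condition number of $\mathbf X_k$. For the scattered correction in $\mathbf u_j$, I would use a Neumann-series bound $\|(\mathbf I-\mathbf G_k^D\mathrm{diag}(\boldsymbol\chi))^{-1}-\mathbf I\|\le \|\mathbf G_k^D\mathrm{diag}(\boldsymbol\chi)\|/(1-\|\mathbf G_k^D\mathrm{diag}(\boldsymbol\chi)\|)$. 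This shows the correction is a perturbation of size $O(\|\mathbf G_k^D\mathrm{diag}(\boldsymbol\chi)\|)$ that is itself smooth over air pixels. Consequently $\mu^u_{ij}\ge 1-O(\cdot)$ for neighboring pixels.

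Finally I would convert high pairwise coherence into approximate linear dependence. Because the air region contains many pixels, the Gram matrix of their normalized columns has off-diagonal entries near $1$ in large blocks. I would apply Gershgorin's theorem, or argue directly via $\|\hat{\mathbf a}_i-e^{\mathrm j\phi}\hat{\mathbf a}_j\|_2^2=2(1-\mu_{ij})$, to show that a combination with $\sigma_{\min}\lesssim\sqrt{2(1-\mu_{ij})}\to 0$ exists. A complementary route is the counting argument: the air columns live in a space of effective dimension at most $TN_r$, and after the far-field approximation in fact at most about the number of resolvable angular beams. With $N_{\mathrm{air}}\gg$ that number, near-dependence is unavoidable.

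The main obstacle is making ``highly coherent'' quantitative for all air pairs, not only neighbors. Distant air pixels can have a small array factor, and the scattered correction in $\mathbf u_j$ is not rigorously small for strong contrasts. I would first settle for the statement in terms of the effective rank (degrees-of-freedom count of band-limited far-field data) together with local coherence, and treat the Neumann bound as valid under a weak-scattering assumption.
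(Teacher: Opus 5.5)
Your argument follows the paper's skeleton: split $\mu_{ij}$ via (A.1) into a Tx-side and an Rx-side coherence, then show each is close to one for nearby air pixels. The difference is in how you get near-colinearity of the factors. The paper postulates it in (B.1), appealing to a wide-sense-stationary, limited-angular-spread channel, and then proves only the perturbation inequality (B.11)--(B.13), $\mu_{ij}\ge 1-C\delta_{ij}$. You derive it from the large-argument Hankel asymptotics, which turns the residual into an explicit array factor in $k_b\|\mathbf r_i-\mathbf r_j\|_2$ and the aperture. That matches the paper's free-space simulations, but not the generic multipath the statistical postulate is meant to cover.

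You are also more careful than the paper in two places. First, you keep the scattered part of $\mathbf u_j$. The reduction \eqref{eq:degradation_u} is in general exact only for $\boldsymbol\chi=\mathbf 0$ (the first-Born operator): for $\chi_j=0$, $\mathbf u_j^{\mathrm T}$ is still the $j$-th row of the total field $(\mathbf I-\mathbf G_k^D\mathrm{diag}(\boldsymbol\chi))^{-1}\mathbf H_{1,k}\mathbf X_k$, which contains the field scattered by the ASR. Second, you make ``approximately linearly dependent'' quantitative, which the paper only asserts.

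Several steps need repair.

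\emph{Gershgorin.} Gershgorin's theorem only lower-bounds $\lambda_{\min}$; that is how the paper uses it in \eqref{eq:roi_condition_bound}. It therefore cannot show near-dependence. Your direct route is the right one: the unit vector $(1,-e^{\mathrm j\phi})/\sqrt2$ on columns $i,j$ gives $\sigma_{\min}\le\sqrt{1-\mu_{ij}}$ for the column-normalized operator.

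\emph{Pilot transfer.} Transferring coherence through $\kappa(\mathbf X_k)$ requires $\mathbf X_k^{\mathrm T}$ to be injective, i.e., $T\ge N_t$. In the paper's setting ($T=8$, $N_t=60$) its kernel has dimension at least $52$. You therefore need a restricted lower bound on $\|\mathbf X_k^{\mathrm T}\mathbf h\|_2$ over the relevant Tx steering vectors, or you must postulate near-colinearity of $\mathbf u_j$ directly, as (B.1) does.

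\emph{Neumann step.} Write $\mathbf W=\mathbf X_k^{\mathrm T}\mathbf H_{1,k}^{\mathrm T}$. The Neumann bound controls the correction $\mathbf W[(\mathbf I-\mathbf Z_{\chi,k})^{-\mathrm T}-\mathbf I]$ only relative to $\|\mathbf W\|_2$, not column by column relative to $\|\mathbf W_{[:,j]}\|_2$. Since $\mathrm{rank}(\mathbf W)\le T$, the conversion costs a factor $\|\mathbf W\|_2/\|\mathbf W_{[:,j]}\|_2\ge\sqrt{N/T}$ for equal-norm columns. Bound the scattered field at $\mathbf r_j$ through the $j$-th row of $\mathbf G_k^D\mathrm{diag}(\boldsymbol\chi)$ instead.

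\emph{Locality.} For a surrounding UCA, your own formula gives $\mu^v_{ij}\approx|J_0(k_b\|\mathbf r_i-\mathbf r_j\|_2)|$. This is close to one only for separations that are a small fraction of a wavelength; on the paper's grid the $5\,\mathrm{cm}$ pixel pitch is about $4.7$ wavelengths at $28\,\mathrm{GHz}$. So the claim that it ``stays large on average'' is unsupported, and coherence is strictly local. The paper has the same restriction in (B.12), and its closing claim about ``any two air-region columns'' does not follow from it either.

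\emph{Counting fallback.} It does give near-rank-deficiency. But, like the weak-scattering step, it applies verbatim to ASR columns. It therefore proves the lemma as stated without singling out the background; that contrast has to come from Lemma~\ref{lem:ASR_indep}.
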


\begin{proof}
From~\eqref{eq:uv_def_main}, when the sampling pixel $j$ lies in the background (air) region where the contrast vanishes, the pilot-induced total-field factor $\mathbf u_j$ reduces to
\begin{equation}
\mathbf u_j
= \big(\mathbf X_k^{\mathrm T}\mathbf H_{1,k}^{\mathrm T}\big)_{[:,j]}
= \big(\mathbf H_{1,k}\mathbf X_k\big)^{\mathrm T}_{[:,j]} .
\label{eq:degradation_u}
\end{equation}
In this case, the scatterer-induced response vanishes, and $\mathbf u_j$ is governed solely by the transmitter-to-pixel propagation channel.

Under the fixed pilot configuration assumed throughout this work, a spatially stationary propagation channel $\mathbf H_{1,k}$ induces pronounced correlations among the columns corresponding to nearby background pixels.
Furthermore, due to the Khatri--Rao structure $\mathbf a_{k,j}=\mathbf u_j\circ \mathbf v_j$, similar propagation-induced correlations in the pixel-to-receiver channel $\mathbf H_{2,k}$ are inherited by the columns of $\mathbf A_k$.

Consequently, in the absence of scatterer-induced interactions, the columns of $\mathbf A_k$ associated with the background region predominantly reflect the statistics of the underlying propagation channels.
This leads to high mutual coherence and, hence, approximate linear dependence among air-region columns.
A rigorous bound on the corresponding column coherence is provided in Appendix~B.
\end{proof}

\begin{remark}
The coherence analysis above is based on the nominal propagation operators $\mathbf H_{1,k}$ and $\mathbf H_{2,k}$, which are assumed to be coherently calibrated and phase-tracked.
In practical millimeter-wave \ac{ISAC} systems, this nominal model may be affected by hardware non-idealities such as oscillator phase noise, array mutual coupling, residual gain--phase mismatch, and calibration errors.
Although these effects have different physical origins, their first-order impact is to perturb the effective Tx-to-domain and domain-to-Rx propagation operators.
We therefore collect their aggregate influence into additive operator mismatches and write
\begin{equation}
\widetilde{\mathbf H}_{\ell,k}
=
\mathbf H_{\ell,k}
+
\Delta \mathbf H_{\ell,k},
\qquad \ell\in\{1,2\}.
\end{equation}

To examine whether such perturbations invalidate the background-column
coherence mechanism, define
\begin{equation}
\mathbf S_{\chi,k}
\triangleq
\left[
\mathbf I-\mathbf G_k^D\operatorname{diag}(\boldsymbol\chi)
\right]^{-\mathrm T},
\quad
\mathbf B_k
\triangleq
\mathbf X_k^{\mathrm T}\mathbf H_{1,k}^{\mathrm T}\mathbf S_{\chi,k}.
\end{equation}
Then the nominal sensing operator can be written as $\mathbf A_k=\mathbf B_k\circ\mathbf H_{2,k}$.
Under the perturbed propagation operators, the perturbed sensing operator is then
\begin{equation}
\widetilde{\mathbf A}_k
=
\widetilde{\mathbf B}_k\circ\widetilde{\mathbf H}_{2,k}
=
\mathbf A_k+\Delta\mathbf A_k,
\end{equation}
where $\Delta\mathbf A_k $ denotes the operator perturbation.
Using the Frobenius-norm bound for the Khatri--Rao product, the induced
operator perturbation satisfies
\begin{equation}
\begin{aligned}
\|\Delta \mathbf A_k\|_F
&\le
\|\Delta \mathbf B_k\circ \mathbf H_{2,k}\|_F
+
\|\mathbf B_k\circ \Delta \mathbf H_{2,k}\|_F  \\
&\quad
+
\|\Delta \mathbf B_k\circ \Delta \mathbf H_{2,k}\|_F  \\
&\le
\|\Delta \mathbf B_k\|_F\|\mathbf H_{2,k}\|_F
+
\|\mathbf B_k\|_F\|\Delta \mathbf H_{2,k}\|_F  \\
&\quad
+
\|\Delta \mathbf B_k\|_F\|\Delta \mathbf H_{2,k}\|_F,
\end{aligned}
\end{equation}
where $\Delta \mathbf B_k = \mathbf X_k^{\mathrm T} (\Delta \mathbf H_{1,k})^{\mathrm T} \mathbf S_{\chi,k}$.
Accordingly, we introduce an aggregate relative operator-mismatch bound
\begin{equation}
\frac{\|\Delta \mathbf A_k\|_F}{\|\mathbf A_k\|_F}
\le
\mathcal E_k,
\end{equation}
where $\mathcal E_k$ upper-bounds the relative sensing-operator error
induced by aggregate propagation mismatch at the $k$-th tone.

For the $j$-th column,
\begin{equation}
\widetilde{\mathbf a}_{k,j}
=
\mathbf a_{k,j}
+
\delta \mathbf a_{k,j},
\qquad
\|\delta \mathbf a_{k,j}\|_2
\le
\|\Delta \mathbf A_k\|_F.
\end{equation}
If the column energies are not severely imbalanced
\begin{equation}
\|\mathbf a_{k,j}\|_2
\ge
c_a\|\mathbf A_k\|_F,
\qquad c_a>0,
\end{equation}
where $c_a$ only excludes nearly vanishing columns and may scale with the number of discretization cells, then there exists $\epsilon_{a,k}>0$ such that
\begin{equation}
\frac{\|\delta \mathbf a_{k,j}\|_2}{\|\mathbf a_{k,j}\|_2}
\le
\epsilon_{a,k}
\le
\frac{\mathcal E_k}{c_a}.
\end{equation}

This relative column perturbation yields a lower bound on the perturbed column coherence. 
Applying the reverse triangle inequality to the numerator of $\widetilde{\mu}_{ij}$, and then using the Cauchy--Schwarz inequality with $\|\delta\mathbf a_{k,i}\|_2\le \epsilon_{a,k}\|\mathbf a_{k,i}\|_2$ and $\|\delta\mathbf a_{k,j}\|_2\le \epsilon_{a,k}\|\mathbf a_{k,j}\|_2$, gives
\begin{equation}
\left|
\left\langle
\widetilde{\mathbf a}_{i},
\widetilde{\mathbf a}_{j}
\right\rangle
\right|
\ge
\left(
\mu_{ij}
-
2\epsilon_{a,k}
-
\epsilon_{a,k}^{2}
\right)
\|\mathbf a_{k,i}\|_2\|\mathbf a_{k,j}\|_2.
\end{equation}
Moreover, the denominator satisfies
\begin{equation}
\|\widetilde{\mathbf a}_{k,i}\|_2
\|\widetilde{\mathbf a}_{k,j}\|_2
\le
(1+\epsilon_{a,k})^2
\|\mathbf a_{k,i}\|_2\|\mathbf a_{k,j}\|_2.
\end{equation}
Combining the two inequalities yields
\begin{equation}
\label{eq:perturbed_coherence_bound}
\widetilde{\mu}_{ij}
\ge
\max\left\{
0,\,
\frac{
\mu_{ij}
-
2\epsilon_{a,k}
-
\epsilon_{a,k}^{2}
}{
(1+\epsilon_{a,k})^{2}
}
\right\}.
\end{equation}

Equation~\eqref{eq:perturbed_coherence_bound} shows that moderate hardware errors degrade the guaranteed lower bound on air-region column coherence but do not alter the physical origin of background-column redundancy.
Therefore, the air-region columns can still preserve high coherence when the aggregate hardware-induced perturbation remains moderate.

To further assess this effect, Table~\ref{tab:hardware_coherence_test} reports a lightweight perturbation test on the propagation operators.
The hardware-error level $\epsilon_{\rm hw}$ is a normalized aggregate mismatch parameter satisfying $|\Delta\mathbf H_{\ell,k}|_F/|\mathbf H_{\ell,k}|_F\le\epsilon_{\rm hw}$, which jointly models gain/phase distortion, coupling-like array mixing, and additive residual propagation errors. The cases $\epsilon_{\rm hw}=0.03$ and $0.05$ represent moderate hardware perturbations, while $\epsilon_{\rm hw}=0.08$ and $0.10$ are included as stress tests.

\begin{table}[t]
\centering
\caption{Effect of hardware perturbations on air-region column coherence.}
\label{tab:hardware_coherence_test}
\begin{tabular}{c|c|c|c}
\hline
\(\epsilon_{\rm hw}\)
&
\(\overline{\widetilde{\mu}}_{\rm air}\)
&
\({\rm LB}\)
&
\(\min(\widetilde{\mu}_{\rm air}-{\rm LB})\)
\\
\hline
0.03 & 0.995 & 0.604 & 0.327 \\
0.05 & 0.991 & 0.406 & 0.511 \\
0.08 & 0.985 & 0.171 & 0.711 \\
0.10 & 0.978 & 0.042 & 0.824 \\
\hline
\end{tabular}
\vspace{3mm}
\end{table}

Here, $\overline{\widetilde{\mu}}_{\rm air}$ denotes the Monte Carlo average of the empirical air-region column coherence, ${\rm LB}$ denotes the theoretical lower bound in~\eqref{eq:perturbed_coherence_bound}, and $\min(\widetilde{\mu}_{\rm air}-{\rm LB})$ denotes the minimum trial-wise empirical margin above the bound.
As shown in Table~\ref{tab:hardware_coherence_test}, the air-region column coherence
remains high even under stress-test perturbations, and the positive empirical margins verify that the observed coherence is consistent with the theoretical lower bound.
\end{remark}

\vspace{-1em}

\subsection{ASR Columns: Weak Coherence}

\begin{figure}[t]
  \centering
  \includegraphics[width=0.98\columnwidth]{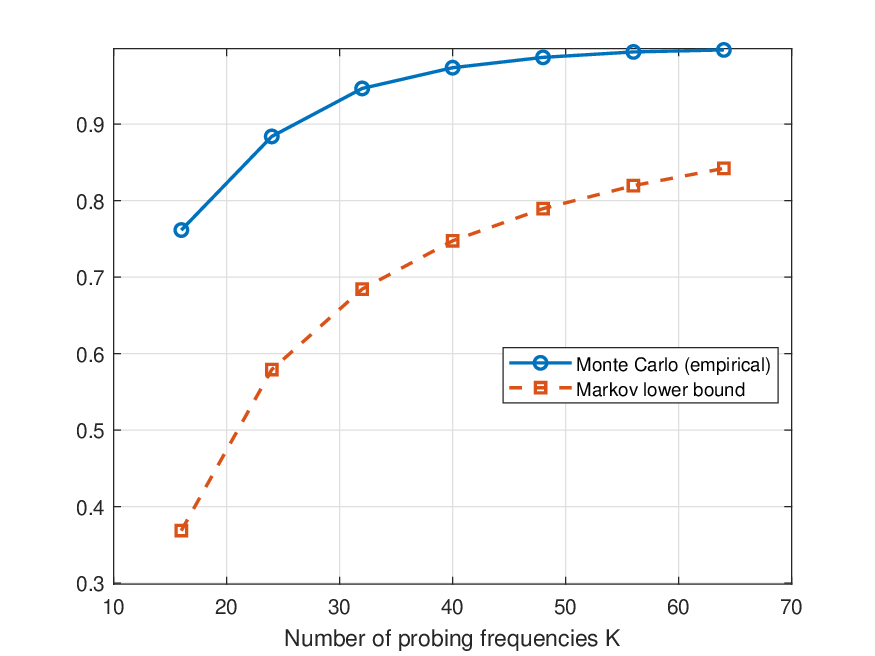}
  \caption{Probability of low residual coherence versus $K$.}
  \label{fig:markov}
  \vspace{5mm}
\end{figure}

\begin{lemma}\label{lem:ASR_indep}
Columns corresponding to the actual scatterer region are weakly correlated, and their coherence is directly affected by the number of probing frequencies.
\end{lemma}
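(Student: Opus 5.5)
The proof parallels Lemma~\ref{lem:air_cols_dep} but exploits the term that vanished there. For a pixel $j$ in the actual scatterer region, $\chi_j\neq 0$. The Tx-side factor in~\eqref{eq:uv_def_main} no longer reduces to a pure propagation column. Writing $\mathbf S_{\chi,k}=[\mathbf I-\mathbf G_k^D\operatorname{diag}(\boldsymbol\chi)]^{-\mathrm T}$, we have $\mathbf u_j=\mathbf X_k^{\mathrm T}\mathbf H_{1,k}^{\mathrm T}(\mathbf S_{\chi,k})_{[:,j]}$. The plan is to split this column as
\[
\mathbf u_j=\big(\mathbf H_{1,k}\mathbf X_k\big)^{\mathrm T}_{[:,j]}+\mathbf X_k^{\mathrm T}\mathbf H_{1,k}^{\mathrm T}\big(\mathbf S_{\chi,k}-\mathbf I\big)_{[:,j]}.
\]
The first term is the propagation-only part, which is the source of coherence in Lemma~\ref{lem:air_cols_dep}. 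The second term is a material-induced multiple-scattering term, obtained through a Neumann expansion of $\mathbf S_{\chi,k}$ in powers of $\mathbf G_k^D\operatorname{diag}(\boldsymbol\chi)$. This term depends on the local contrast and on the Hankel-kernel couplings $G_k(\mathbf r_i,\mathbf r')$. These couplings oscillate rapidly in $k_b(\omega_k)\|\mathbf r_i-\mathbf r'\|$, so the scattering signatures of two ASR pixels $i\neq j$ carry pixel-specific phase patterns.

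Next, the single-tone Khatri--Rao factorization $\langle\mathbf a_{k,i},\mathbf a_{k,j}\rangle=\langle\mathbf u_i,\mathbf u_j\rangle\langle\mathbf v_i,\mathbf v_j\rangle$ gives the tone-wise NCC $\mu_{ij}^{(k)}=|\rho^u_{ij,k}|\,|\rho^v_{ij,k}|$. The multi-frequency operator $\mathbf A=[\mathbf A_1^{\mathrm T},\ldots,\mathbf A_K^{\mathrm T}]^{\mathrm T}$ then has the stacked-column coherence
\[
\mu_{ij}^{(K)}=\frac{\big|\sum_{k=1}^K\langle\mathbf a_{k,i},\mathbf a_{k,j}\rangle\big|}{\sqrt{\sum_k\|\mathbf a_{k,i}\|_2^2\sum_k\|\mathbf a_{k,j}\|_2^2}}.
\]
I would model the per-tone cross terms $Z_k\triangleq\langle\mathbf a_{k,i},\mathbf a_{k,j}\rangle$ as random variables under the channel assumptions of Section~III-A: spatial stationarity within a tone, and weak second-order cross-tone correlation. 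For ASR pixels, the scattering contribution makes $\mathbb E[Z_k]$ small relative to the column energies, because the material-induced term randomizes the phase of $Z_k$.

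The key steps, in order, are as follows.
\begin{enumerate}
\item Bound $|\mathbb E Z_k|\le \bar\mu\,\mathbb E[\|\mathbf a_{k,i}\|\|\mathbf a_{k,j}\|]$ with $\bar\mu$ small. The bound follows from the decorrelating scattering term.
\item Bound $\mathrm{Var}(\sum_k Z_k)\le \sum_k\mathrm{Var}Z_k+\sum_{k\neq k'}|\mathrm{Cov}(Z_k,Z_{k'})|$, using the weak cross-tone covariance assumption. This makes the variance grow like $O(K)$.
\item Apply the law of large numbers to the denominator, so that the energies concentrate at order $K$.
\item Apply Chebyshev's (or Markov's) inequality to obtain
\[
\Pr\big(\mu_{ij}^{(K)}>\bar\mu+\tau\big)\le \frac{C}{K\tau^2}.
\]
\end{enumerate}
This shows that the coherence of ASR columns is weak and decreases with $K$, matching the "probability of low residual coherence versus $K$" curve in Fig.~\ref{fig:markov}. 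By contrast, the air-region columns have a large deterministic mean $\mathbb E Z_k$, so summing over tones does not reduce their coherence. This is why the frequency effect is specific to the ASR.

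The main obstacle is step 1: rigorously justifying that the material-induced term decorrelates $\mathbf u_i$ and $\mathbf u_j$, rather than merely perturbing an already coherent propagation column. I would first attempt it at first order in the Neumann series, where $(\mathbf S_{\chi,k}-\mathbf I)_{[:,j]}\approx\chi_j(\mathbf G_k^D)_{[j,:]}^{\mathrm T}$. Inner products of these Green's rows at distinct $\mathbf r_i,\mathbf r_j$ behave like Bessel-type sums $J_0(k_b\|\mathbf r_i-\mathbf r_j\|)$, which are bounded away from one. If a clean deterministic bound fails, I would state the result under the paper's stationarity assumption, treating $\bar\mu$ as an assumed residual-coherence level and keeping the $K$-dependence fully rigorous through the variance/Markov step.
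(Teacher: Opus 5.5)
Your route is essentially the paper's. It uses the Khatri--Rao factorization of the NCC and a Neumann expansion of the resolvent to isolate the material-induced term carried by the oscillatory in-domain Green's kernel, followed by a second-moment/Markov argument across weakly correlated tones.

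The paper does not prove your step~1 either. It writes $\mathbf g_{k,i}^{\mathrm H}\mathbf g_{k,j}$ via the large-argument Hankel expansion as an oscillatory phase sum plus a remainder $\Delta_{k,ij}$. It then posits a unit-modulus residual phasor $z_k(i,j)=\exp(\mathrm j(\varphi_{ij,k}+\theta_k))$ with $\theta_k$ i.i.d.\ uniform, so $\mathbb E[z_k]=0$ holds by construction. Your fallback, an assumed residual level $\bar\mu$, is therefore at the paper's level of rigor. It is mildly more general, since the paper's model forces the mean residual to zero. It also works with the stacked multi-frequency column coherence, which is the quantity that matters, rather than an unweighted average of unit phasors.

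Do not expect the $J_0$ heuristic to close step~1. First, $J_0(k_b\|\mathbf r_i-\mathbf r_j\|)$ arises from a closed far-field aperture (Helmholtz--Kirchhoff). Here the sum runs over the ASR itself, which contains $\mathbf r_i,\mathbf r_j$; there the asymptotic expansion fails and self-cell terms dominate. Second, a small $\mathbf g_{k,i}^{\mathrm H}\mathbf g_{k,j}$ does not transfer to $\mathbf u_i^{\mathrm H}\mathbf u_j$. The coefficients pass through $\mathbf W=\mathbf X_k^{\mathrm T}\mathbf H_{1,k}^{\mathrm T}$, which has rank at most $T$ and locally coherent columns (Lemma~1). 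The paper leaves this link unbridged too. Also, the first-order term is $\operatorname{diag}(\boldsymbol\chi)(\mathbf G_k^D)^{\mathrm T}_{[:,j]}$. It is weighted by the contrast at the summation pixels and is therefore supported on the ASR; it is not $\chi_j$ times a full Green's row.

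The concrete error is in step~2: a uniform small bound on pairwise cross-tone correlation does not make the variance $O(K)$. That bound, $|\rho_{k\ell}|\le\rho_f$ for $k\neq\ell$, is what ``small in a second-order sense'' amounts to, and it is what the paper uses in \eqref{eq:freq_corr_assump}. Under it, $\sum_{k\neq k'}|\mathrm{Cov}(Z_k,Z_{k'})|$ can be of order $K^2\rho_f$. Chebyshev then gives $C(1/K+\rho_f)/\tau^2=C/(K_{\mathrm{eff}}\tau^2)$ with $K_{\mathrm{eff}}=K/(1+(K-1)\rho_f)$, which saturates at $\rho_f$ rather than vanishing. This is exactly the paper's bound $\mathbb E|\bar z(i,j)|^2\le 1/K_{\mathrm{eff}}$, which the paper uses to note that the multi-frequency gain disappears as $\rho_f\to1$.

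Your $C/(K\tau^2)$ needs either independent tones ($\rho_f=0$, the paper's ideal case) or an added summability assumption such as $\sup_k\sum_{\ell\neq k}|\rho_{k\ell}|<\infty$. Step~3's concentration of the column energies needs the same strengthening, or a deterministic lower bound on those energies. With $K$ replaced by $K_{\mathrm{eff}}$, the lemma's qualitative conclusion still follows.
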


\begin{proof}
Recall $\mathbf a_{k,j}=\mathbf u_j\circ \mathbf v_j$. By (A.1),
\begin{equation}\label{eq:NCC_factor_ASR_slim}
\mu_{ij}
=
\frac{|\mathbf u_i^\mathrm{H}\mathbf u_j|}{\|\mathbf u_i\|_2\|\mathbf u_j\|_2}\,
\frac{|\mathbf v_i^\mathrm{H}\mathbf v_j|}{\|\mathbf v_i\|_2\|\mathbf v_j\|_2}.
\end{equation}

Within the spatial-coherence regime (Appendix~B), there exists $\epsilon_v\in(0,1)$, which quantifies the maximal deviation of the
normalized correlation of the propagation factor $\mathbf v$ from unity, such that
\begin{equation}\label{eq:v_corr_bound}
1-\epsilon_v \le
\frac{|\mathbf v_i^\mathrm{H}\mathbf v_j|}{\|\mathbf v_i\|_2\|\mathbf v_j\|_2}
\le 1 .
\end{equation}

Thus, in the locally coherent propagation regime, the column coherence is dominated by the in-domain factor 
$u_j$, whose decorrelation is induced by multiple scattering and multi-frequency phase mixing.

For a homogeneous scatterer, the contrast is spatially constant within the \ac{ASR}.
Let $\mathbf W \triangleq \mathbf X_k^{\mathrm T}\mathbf H_{1,k}^{\mathrm T}\in\mathbb C^{T\times N}$ and
$\mathbf Z_{\chi,k} \triangleq \mathbf G_k^{D}\,\mathrm{diag}(\boldsymbol\chi)\in\mathbb C^{N\times N}$.
Then
\begin{equation}
\mathbf{e} \triangleq \mathbf W(\mathbf I-\mathbf Z_{\chi,k})^{-{\mathrm T}}\in\mathbb C^{T\times N}.
\label{eq:ASR_inverse}
\end{equation}

Since the scattered-field energy and the number of effective scattering interactions are bounded,
spectral radius $\rho(\mathbf Z_{\chi,k})<1$ holds and the Neumann series converges:
\begin{equation}\label{eq:Neumann_series_u}
\mathbf e
= \mathbf W\sum_{n=0}^{\infty} \big(\mathbf Z_{\chi,k}^{\mathrm T}\big)^{n}.
\end{equation}

The zeroth-order term $\mathbf W$ corresponds to propagation without contrast-induced coupling,
while the higher-order terms represent in-domain multiple scattering.
The additional decorrelation inside the \ac{ASR} is therefore governed by the in-domain Green's function
$\mathbf G_k^D$.

Let $\mathbf G_{k,\mathrm{ASR}}^{D}\in\mathbb C^{N_d\times N_d}$ denote the in-domain Green's submatrix
restricted to the ASR grid with $N_d$ pixels, and let $\mathbf g_{k,i}\triangleq [\mathbf G_{k,\mathrm{ASR}}^{D}]_{[:,i]}\in\mathbb C^{N_d \times 1}$.
Using the large-argument expansion of the Hankel function, the inner product between two distinct columns can be written as:
\begin{equation}\label{eq:G_inner_random_phase}
\mathbf g_{k,i}^{\mathrm H}\mathbf g_{k,j}
= \sum_{m=1}^{N_d} \alpha_{m,k}\,
\exp\!\big(-\mathrm j\,\phi_{ij,k}(\mathbf r_m)\big)
\;+\; \Delta_{k,ij},
\end{equation}
where
\begin{equation}
\phi_{ij,k}(\mathbf r_m)\triangleq k_b\Big(\|\mathbf r_m-\mathbf r_j\|_2-\|\mathbf r_m-\mathbf r_i\|_2\Big),
\end{equation}
and $k_b$ denotes the background wavenumber at tone $k$.
Define $d_{m,i}\triangleq \|\mathbf r_m-\mathbf r_i\|_2$ and $d_{m,j}\triangleq \|\mathbf r_m-\mathbf r_j\|_2$.
The amplitude weight induced by the large-argument Hankel expansion is
\begin{equation}\label{eq:alpha_def}
\alpha_{m,k}\triangleq \frac{2|b_k|^2}{\pi k_b\sqrt{d_{m,i}d_{m,j}}},
\qquad b_k\triangleq -\frac{\mathrm j}{4}k_b^2.
\end{equation}

This approximation is valid when $k_b d \gg 1$, a condition that typically holds in the $28\,\mathrm{GHz}$ band.

The remainder term $\Delta_{k,ij}$ collects the higher-order terms ignored in the Hankel asymptotic:
\begin{equation}\label{eq:Delta_def}
\Delta_{k,ij}\triangleq
\sum_{m=1}^{N_d}\Big(
G_k(\mathbf r_m,\mathbf r_i)^{*}\,G_k(\mathbf r_m,\mathbf r_j)
-\alpha_{m,k}\,e^{-\mathrm j\,\phi_{ij,k}(\mathbf r_m)}
\Big).
\end{equation}

Define the geometry-induced effective phase
\begin{equation}\label{eq:phi_eff}
\varphi_{ij,k}\triangleq
\arg\!\left(\sum_{m=1}^{N_d}\alpha_{m,k}\,e^{-\mathrm j\,\phi_{ij,k}(\mathbf r_m)}\right).
\end{equation}
Across frequencies, we characterize the multi-frequency phase-mixing effect under the standard frequency-incoherent regime,
where the tone spacing $\Delta f$ exceeds the coherence bandwidth so that cross-tone correlations are weak.
We model the aggregate residual by an i.i.d.\ random phase $\theta_k\sim\mathrm{Unif}[0,2\pi)$ across tones,
independent of the geometry, and define the cross-tone phasor as:
\begin{equation}\label{eq:cross_tone_phasor}
z_k(i,j)\triangleq \exp\!\big(\mathrm j(\varphi_{ij,k}+\theta_k)\big),\qquad
\bar z(i,j)\triangleq \frac{1}{K}\sum_{k=1}^{K} z_k(i,j).
\end{equation}

In practical \ac{ISAC} deployments, the probing tones are not necessarily strictly uncorrelated. 
We therefore consider a more general second-order correlation model:
\begin{equation}
\label{eq:freq_corr_assump}
\begin{aligned}
|z_k(i,j)| &= 1,\\
\mathbb E\!\left[z_k(i,j)\right] &= 0,\\
\mathbb E\!\left[z_k(i,j)z_\ell(i,j)^{*}\right]
&=
\rho_{k\ell},
\qquad
|\rho_{k\ell}|\le \rho_f,\quad k\neq \ell,
\end{aligned}
\end{equation}
where $\rho_f\in[0,1]$ characterizes the residual cross-tone correlation. 
Since $\rho_{k\ell}$ may be complex-valued, we bound the cross-tone contribution through its real part, with $\Re\{\rho_{k\ell}\}\le |\rho_{k\ell}|\le\rho_f$. 
Under \eqref{eq:freq_corr_assump},
\begin{equation}
\label{eq:zbar_second_moment_corr}
\begin{aligned}
\mathbb E\!\left[|\bar z(i,j)|^{2}\right]
&=
\frac{1}{K^2}
\left(
K+
\operatorname{Re}\left\{
\sum_{k\neq \ell}\rho_{k\ell}
\right\}
\right)
\\
&\le
\frac{1}{K^2}
\left(
K+
\sum_{k\neq \ell}|\rho_{k\ell}|
\right)
\\
&\le
\frac{1}{K}
+
\frac{K-1}{K}\rho_f.
\end{aligned}
\end{equation}
This bound shows that the decorrelation gain is governed not only by the number of probing tones $K$ but also by the residual frequency correlation $\rho_f$.
Equivalently, we define the effective number of independent probing tones as
\begin{equation}
\label{eq:keff}
K_{\mathrm{eff}}
\triangleq
\frac{K}{1+(K-1)\rho_f},
\end{equation}
so that
\begin{equation}
\label{eq:zbar_keff}
\mathbb E\!\left[|\bar z(i,j)|^{2}\right]
\le
\frac{1}{K_{\mathrm{eff}}}.
\end{equation}

The ideal frequency-incoherent case used in the following numerical verification corresponds to $\rho_f=0$.
Then $K_{\mathrm{eff}}=K$, and \eqref{eq:zbar_keff} reduces to
\begin{equation}
\label{eq:zbar_second_moment}
\mathbb E\!\left[|\bar z(i,j)|^{2}\right]
\le
\frac{1}{K}.
\end{equation}
Thus, the $1/K$ decay should be interpreted as the most favorable case with effectively independent probing tones.
When the tones are partially correlated, the same argument still holds with $K$ replaced by
$K_{\mathrm{eff}}$. 
Hence, \ac{ASR} columns do not immediately lose their informativeness under frequency correlation; instead, the multi-frequency decorrelation gain gradually weakens as $\rho$ increases.
In the extreme case $\rho_f\rightarrow 1$, $K_{\mathrm{eff}}\rightarrow 1$, and the
multi-frequency averaging gain almost vanishes.

To illustrate the favorable frequency-incoherent case, Fig.~\ref{fig:markov} presents a numerical verification based on the phase-randomization model.
The horizontal axis is the number of probing frequencies, and the vertical axis is the probability that the residual coherence between two \ac{ASR}-related columns remains below the fixed threshold $0.25$.
The blue curve shows the Monte Carlo empirical probability, whereas the orange curve shows a conservative analytical lower bound obtained from the second-moment analysis via Markov's inequality.
As $K=K_{\mathrm{eff}}$ increases in this ideal case, the residual coherence is suppressed and the inter-column correlation approaches the weakly correlated regime.
\end{proof}

Consequently, these observations imply that once the frequency-dimension diversity is fixed by system constraints, the spatial selection of the \ac{ROI} becomes the predominant factor in governing the problem's ill-posedness. In this regime, the system's conditioning is highly dictated by the inclusion of strongly coherent background columns, underscoring \ac{ROI} restriction as the decisive mechanism for stabilizing the material reconstruction process.

\begin{figure}[t]
  \centering
  \includegraphics[width=0.8\columnwidth]{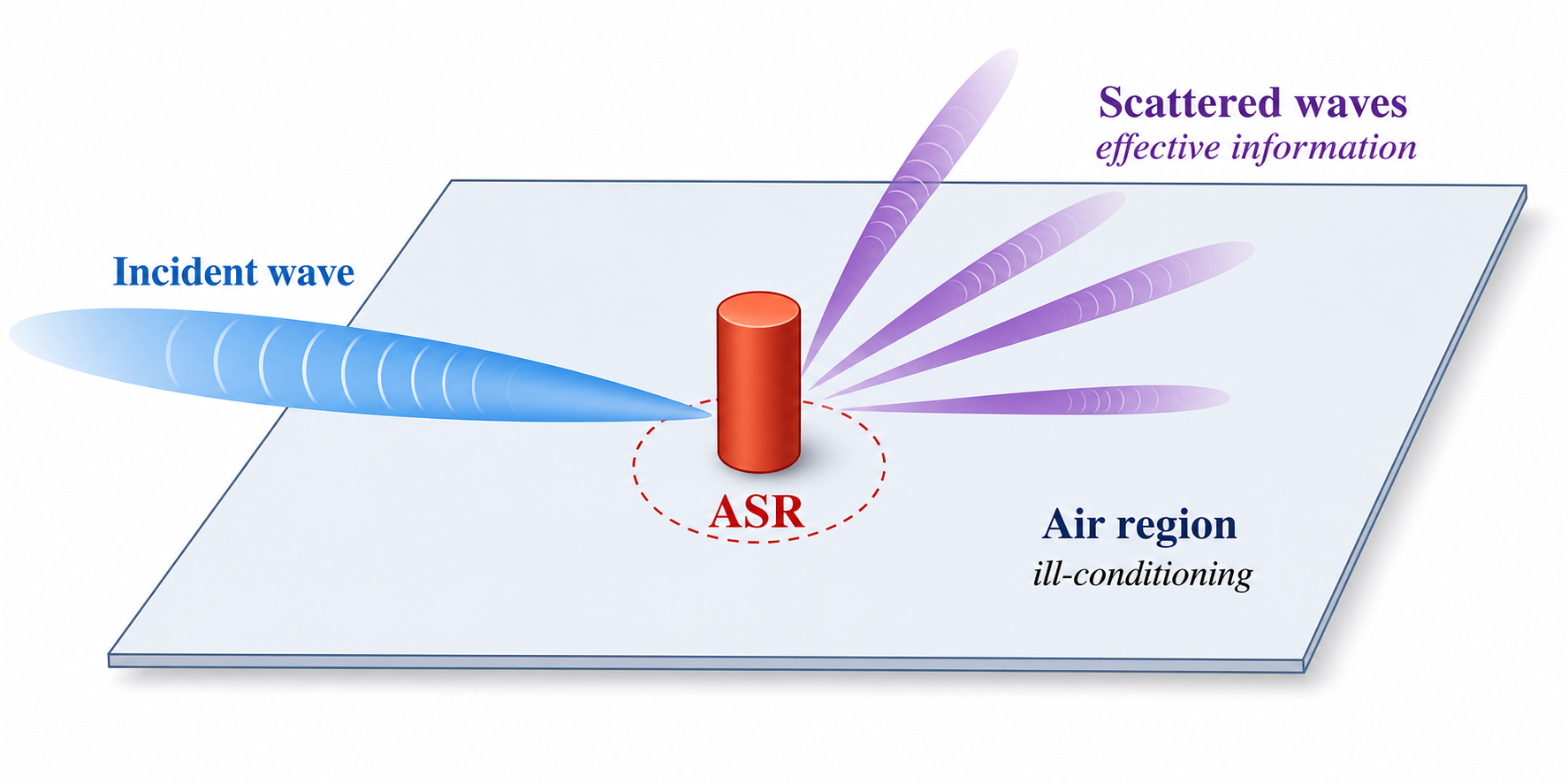}
  \caption{Illustration of the distinction of the ASR and air-background.}
  \vspace{3mm}
  
\end{figure}

\vspace{-1em}

\subsection{Hilbert-Space Interpretation of the Coherence Mechanism}

\subsubsection{Operator-Induced Coherence in Background and ASR Subspaces}

We now reinterpret Lemmas~1 and~2 from a Hilbert-space operator viewpoint.
The purpose is not to introduce a new discretization-dependent coherence metric, but to explain why the coherence contrast between background columns and \ac{ASR} columns follows from the underlying propagation and scattering operators.

Let \(\mathcal X_D=L^2(D)\) denote the contrast Hilbert space under the \ac{2D} TM\(_z\) model, and let \(\mathcal Y_k\) denote the \ac{CSI} observation Hilbert space.
For the multiplication operators below, we assume $\chi\in L^\infty(D)\cap L^2(D)$ and $f\in L^\infty(D)\cap L^2(D)$, with localized perturbations extended by zero when necessary, which ensures that $\mathcal M_\chi$ and $\mathcal M_f$ are bounded on $L^2(D)$, $L^\infty(D)$ garuntees the boundedness.

The continuous scattering operator corresponding to the discretized scattering matrices
\(\mathbf A_k^{(n)}\) is denoted by
\begin{equation}
    \mathcal A^{(n)}_k:\mathcal X_D\rightarrow \mathcal Y_k ,
    \label{eq:continuous_scattering_operator}
\end{equation}
which maps contrast functions to \ac{CSI} responses.

For any subspace \(\mathcal U\subset \mathcal X_D\), we define the operator-induced coherence as
\begin{equation}
    \mu_{\mathcal A_k}(\mathcal U)
    =
    \sup_{\substack{
    f,g\in \mathcal U\\
    f\perp g\\
    \mathcal A_k f\neq 0,\ \mathcal A_k g\neq 0
    }}
    \frac{
    \left|
    \left\langle
    \mathcal A_k f,
    \mathcal A_k g
    \right\rangle_{\mathcal Y_k}
    \right|
    }{
    \|\mathcal A_k f\|_{\mathcal Y_k}
    \|\mathcal A_k g\|_{\mathcal Y_k}
    },
    \label{eq:operator_induced_coherence}
\end{equation}
where \(f\) and \(g\) are two contrast perturbation functions in \(\mathcal U\), and \(f\perp g\) means that they are orthogonal in the contrast Hilbert space \(\mathcal X_D\).
When \(f\) and \(g\) are localized within two discretization cells of \(D\), \eqref{eq:operator_induced_coherence} reduces to the normalized column correlation in~\eqref{eq:ncc_def}.
Thus, the discrete column-coherence behavior analyzed in Lemmas~1 and~2 is a finite-dimensional manifestation of this operator-level quantity.

In the air region, material-dependent in-domain scattering is absent.
Hence, for a given tone \(k\) and pilot \(x_{k,t}\), the linearized background map reduces to the propagation-only operator
\begin{equation}
\mathcal A^{\rm air}_{k,t} f
=
\mathcal H_{2,k}
\left(
f\mathcal H_{1,k}x_{k,t}
\right),
\label{eq:air_operator_propagation_form}
\end{equation}
where \(\mathcal H_{1,k}\) and \(\mathcal H_{2,k}\) denote the Tx-to-domain and domain-to-Rx Green propagation operators, respectively.
Since the antenna arrays are separated from the sensing domain, the external Green
kernels are smooth and nonsingular over \(D\). Thus, \(\mathcal H_{2,k}\) acts as a compact smoothing operator, and \(\mathcal A^{\rm air}_{k,t}\) is compact as its composition with the bounded multiplication operator induced by \(\mathcal H_{1,k}x_{k,t}\).
Consequently,
\begin{equation}
\sigma_m\left(\mathcal A^{\rm air}_{k,t}\right)
\rightarrow 0,
\qquad
m\rightarrow \infty .
\label{eq:compact_air_operator_singular_decay}
\end{equation}
Therefore, the propagation-only background map compresses many object-space directions into a low-effective-dimensional response set, which yields near-null directions and strong redundancy among air-region columns after discretization.

The local high coherence of neighboring background columns is further induced by the smoothness of the Green kernels. 
If \(\psi_{k,t}(\mathbf r)\in\mathcal Y_k\) denotes the \ac{CSI} response generated by a unit localized background component at \(\mathbf r\), then for nearby
\(\mathbf r_i,\mathbf r_j\in D\),
\begin{equation}
    \left|
    \left\langle
    \frac{\psi_{k,t}(\mathbf r_i)}
    {\|\psi_{k,t}(\mathbf r_i)\|_{\mathcal Y_k}},
    \frac{\psi_{k,t}(\mathbf r_j)}
    {\|\psi_{k,t}(\mathbf r_j)\|_{\mathcal Y_k}}
    \right\rangle_{\mathcal Y_k}
    \right|
    \rightarrow 1,
    \qquad
    \mathbf r_i\rightarrow \mathbf r_j .
    \label{eq:local_background_coherence}
\end{equation}
Thus, compactness explains the weak distinguishability of the background subspace, while the smooth Green kernel explains the local high coherence of neighboring air-region columns.

In contrast, inside the actual scatterer region, the forward map is no longer propagation-only.
Let \(D_s\subset D\) denote the \ac{ASR}, and let \(\mathcal X_s=L^2(D_s)\) be the corresponding contrast subspace.
For a fixed contrast distribution \(\chi\), or within one Born-iterative linearization,
the corresponding \ac{ASR} sensing map can be written as
\begin{equation}
\begin{aligned}
\mathcal A^{\rm ASR}_{\chi,k,t} f
&=
\mathcal H_{2,k}
\mathcal M_f
\mathscr S_{\chi,k}
\mathcal H_{1,k}x_{k,t}, \\
\mathscr S_{\chi,k}
&=
\left(\mathcal I-\mathcal G_k\mathcal M_{\chi}\right)^{-1},
\end{aligned}
\label{eq:asr_operator_resolvent_form}
\end{equation}
where \(f\in L^2(D_s)\) denotes a localized contrast perturbation, \(\mathcal G_k\) is the in-domain Green operator, \(\mathcal M_f\) and \(\mathcal M_{\chi}\) are multiplication operators, and \(\mathscr S_{\chi,k}\) is the Lippmann--Schwinger resolvent evaluated at the current contrast \(\chi\). 
Compared with the air-region operator, \(\mathcal A^{\rm ASR}_{\chi,k,t}\) contains the internal resolvent \(\mathscr S_{\chi,k}\). 
Thus, \ac{ASR} responses are shaped by material support, in-domain Green-function coupling, and resolvent-induced modal reweighting.

To make this spectral effect explicit, define the internal scattering operator as \(\mathcal L_{\chi,k}\triangleq\mathcal G_k\mathcal M_{\chi}\).
If \(q_\ell\) is an internal scattering mode satisfying \(\mathcal L_{\chi,k}q_\ell=\Lambda_\ell q_\ell\), then, in the non-resonant regime where \(\mathcal I-\mathcal L_{\chi,k}\) is invertible, the resolvent acts on this mode as
\begin{equation}
\label{eq:resolvent_modal_reweighting}
\mathscr S_{\chi,k}q_\ell
=
\frac{1}{1-\Lambda_\ell}q_\ell .
\end{equation}
Thus, each internal scattering mode is selectively amplified or attenuated by
the spectral factor \(1/(1-\Lambda_\ell)\).
Modes with eigenvalues closer to one are more strongly reweighted, whereas modes with small \(|\Lambda_\ell|\) remain closer to the propagation-dominated response.
Consequently, different localized \ac{ASR} perturbations \(f\) can excite different observable mixtures of the internal modes after the action of \(\mathcal H_{1,k}\), \(\mathscr S_{\chi,k}\), \(\mathcal M_f\), and \(\mathcal H_{2,k}\).
This material-induced spectral diversity makes \ac{ASR}-related columns comparatively more distinguishable after discretization, providing an operator-level explanation for Lemma~2.

\subsubsection{Extension to 3D Multi-Polarization Scattering}

From the Hilbert-space viewpoint, the above coherence mechanism extends to the \ac{3D} vector \ac{EM} setting.
In the two-dimensional TM\(_z\) model, the field is scalar and the integral equation is defined over \(D\subset\mathbb R^2\).
In the \ac{3D} homogeneous and isotropic case, the unknown contrast remains scalar-valued, \(\chi(\mathbf r)\in L^2(D)\) with \(D\subset\mathbb R^3\), whereas the electric field belongs to \(L^2(D;\mathbb C^3)\).

The \ac{3D} frequency-domain Lippmann--Schwinger equation is
\begin{equation}
    \mathbf E_k^t(\mathbf r)
    =
    \mathbf E_k^i(\mathbf r)
    +
    k_b^2
    \int_D
    \overline{\overline{\mathbf G}}_k(\mathbf r,\mathbf r')
    \chi(\mathbf r')
    \mathbf E_k^t(\mathbf r')
    d\mathbf r',
    \label{eq:ls_equation_3d}
\end{equation}
where \(\mathbf E_k^t,\mathbf E_k^i\in L^2(D;\mathbb C^3)\) denote the total and incident electric fields, respectively, and \(d\mathbf r'\) is the \ac{3D} volume element.
The dyadic Green function is
\begin{equation}
    \overline{\overline{\mathbf G}}_k(\mathbf r,\mathbf r')
    =
    \left(
    \mathbf I_3
    +
    \frac{1}{k_b^2}\nabla\nabla^{\mathrm T}
    \right)
    \frac{
    e^{-\mathrm j k_b\|\mathbf r-\mathbf r'\|}
    }{
    4\pi\|\mathbf r-\mathbf r'\|
    } .
    \label{eq:dyadic_green_tensor_3d}
\end{equation}
Compared with the \ac{2D} scalar case, the dyadic Green function additionally captures coupling among electric-field components, and the observation model must include polarization projection.

In the air region, no material-dependent in-domain scattering is present.
The corresponding propagation-only map can be written as
\begin{equation}
    \mathcal A_{k,t}^{{\rm air},3D} f
    =
    \mathcal P_{k,{\rm pol}}
    \mathcal H_{2,k}^{3D}
    \left(
    f\mathbf E_{k,t}^{i}
    \right),
    \label{eq:air_operator_3d}
\end{equation}
where \(\mathcal H_{2,k}^{3D}\) is the domain-to-Rx propagation operator and \(\mathcal P_{k,{\rm pol}}\) is the multi-polarization observation projection. 
For example, suppose that the \(m\)-th receive antenna supports the polarization vector \(\mathbf p_{m,\ell}\in\mathbb C^3\), \(\ell=1,\ldots,N_{\rm pol}\), where \(N_{\rm pol}\) denotes the number of polarization channels supported by each receive antenna. Then
\begin{equation}
    [\mathcal P_{k,{\rm pol}}\mathbf u]_{m,\ell}
    =
    \mathbf p_{m,\ell}^{\mathrm H}
    \mathbf u(\mathbf r_m).
    \label{eq:polarization_projection}
\end{equation}
Since \(\mathcal P_{k,{\rm pol}}\) is a finite-dimensional projection, it is bounded.

Because the Tx/Rx arrays are separated from the sensing domain, the external propagation kernels associated with \(\mathcal H_{1,k}^{3D}\) and \(\mathcal H_{2,k}^{3D}\) are smooth and nonsingular over \(D\). 
Hence, \(\mathcal H_{2,k}^{3D}\) remains a compact smoothing propagation operator.
This compactness statement concerns only the external Tx/Rx propagation operators; the in-domain dyadic Green operator in the Lippmann--Schwinger equation still contains its standard singularity and is understood in the usual volume-integral-equation sense.
Since the multiplication map \(f\mapsto f\mathbf E_{k,t}^{i}\) and the projection \(\mathcal P_{k,{\rm pol}}\) are bounded, \(\mathcal A_{k,t}^{{\rm air},3D}\) is compact as a composition of bounded operators with a compact propagation operator.
Therefore, the compact smoothing mechanism responsible for air-region coherence in the \ac{2D} TM\(_z\) case is preserved in the \ac{3D} multi-polarization setting.

For the \ac{ASR}, the contrast multiplication operator acts on vector electric fields as
\begin{equation}
    \mathcal M_{\chi}^{3D}\mathbf u
    =
    \chi(\mathbf r)\mathbf u(\mathbf r),
    \qquad
    \mathbf u\in L^2(D;\mathbb C^3).
    \label{eq:contrast_multiplication_3d_asr}
\end{equation}
Let \(\mathcal G^{3D}\) denote the in-domain dyadic Green operator. 
The \ac{3D} Lippmann--Schwinger resolvent is then
\begin{equation}
    \mathscr S_{k,\chi}^{3D}
    =
    \left(
    I-\mathcal G^{3D}\mathcal M_{\chi}^{3D}
    \right)^{-1}.
    \label{eq:resolvent_3d_asr}
\end{equation}
Accordingly, the \ac{ASR}-restricted sensing map is
\begin{equation}
    \mathcal A_{\chi,k,t}^{{\rm ASR},3D} f
    =
    \mathcal P_{k,{\rm pol}}
    \mathcal H_{2,k}^{3D}
    \mathcal M_f^{3D}
    \mathscr S_{k,\chi}^{3D}
    \mathcal H_{1,k}^{3D} \mathbf x_{k,t},
    \label{eq:asr_sensing_operator_3d}
\end{equation}
where \(f\in L^2(D_s)\) denotes a localized contrast perturbation. 
This map has the same resolvent-mediated structure as in the 2D TM\(_z\) case, while
the dyadic Green operator and the polarization projection modify the vector-field coupling and observation dimension.
Therefore, the \ac{3D} formulation preserves the same operator-theoretic distinction between compact propagation-dominated air responses and material-supported \ac{ASR}
responses.
Consequently, the structural conclusions of Lemmas~1 and~2 remain valid at the operator-structure level in the \ac{3D} multi-polarization setting.

\vspace{-0.5em}
\section{Empirical Validation via ROI-Constrained Subproblems}

\subsection{ROI-Constrained Algorithm}

Based on the coherence analysis in Section~III, we use an \ac{LSM}-guided \ac{ROI} to suppress highly coherent background columns and solve the quantitative update only in the resulting reduced subspace.
The resulting \ac{ROI}-\ac{QP} procedure is used as an analysis-guided implementation to
validate the proposed subspace principle, rather than as the main algorithmic contribution.

We first compute the multi-frequency \ac{LSM} indicator \(\mathcal J_K\) from the multi-static response matrix \(\mathbf U_k\) introduced in Section~II-B.
The indicator is converted into the normalized score
\begin{equation}
\label{eq:roi_score}
s_p
=
\frac{[\mathcal{J}_K]_p - \mathcal{J}_{K,\max}}
{\mathcal{J}_{K,\max}-\mathcal{J}_{K,\min}+\epsilon},
\qquad p=1,\ldots,N,
\end{equation}
where a smaller \(s_p\) indicates a more scatterer-like pixel.
The threshold \(\eta\) is selected by applying a trimmed maximum-gap rule to \(\{s_p\}_{p=1}^{N}\), and the \ac{ROI} index set is defined as
\begin{equation}
\label{eq:roi_index}
\mathcal I
\triangleq
\{p:s_p\le \eta\},
\qquad
P\triangleq |\mathcal I|.
\end{equation}
Here, \(\epsilon\) avoids numerical degeneracy in the normalization, while the trimming step prevents extreme outliers from dominating the maximum-gap threshold.

\begin{algorithm}[!t]
\caption{LSM-Guided ROI-QP Reconstruction}
\label{alg:roi_qp_method}
\LinesNumbered
\DontPrintSemicolon

\KwIn{Measurements $\{\mathbf y_k\}_{k=1}^{K}$, LSM response matrices
$\{\mathbf U_k\}_{k=1}^{K}$, sensing operators $\{\mathbf A_k\}_{k=1}^{K}$,
normalization constant $\epsilon_{\mathrm{norm}}$, trimming ratio
$q_{\mathrm{trim}}$, regularization parameters $(\alpha,\beta)$, stopping
parameters $(\tau_{\mathrm{err}},M)$.}

\KwOut{Reconstructed contrast vector $\hat{\boldsymbol\chi}$.}

Compute the multi-frequency LSM indicator $\mathcal{J}_K$ and determine the
ROI index set $\mathcal I$ using \eqref{eq:roi_score}--\eqref{eq:roi_index}
with $\epsilon_{\mathrm{norm}}$ and $q_{\mathrm{trim}}$.\;

Stack the multi-frequency observations:
\[
\mathbf Y
\gets
\begin{bmatrix}
\mathbf y_1\\
\vdots\\
\mathbf y_K
\end{bmatrix},
\qquad
\widetilde{\mathbf Y}
\gets
\begin{bmatrix}
\Re\{\mathbf Y\}\\
\Im\{\mathbf Y\}
\end{bmatrix}.
\]

Initialize $n\gets1$, $\boldsymbol\chi^{(0)}\gets\mathbf 0$,
$\boldsymbol\chi^{\mathrm{sub},(0)}
\gets[\boldsymbol\chi^{(0)}]_{\mathcal I}=\mathbf 0$,
$e_{\mathrm{stop}}\gets+\infty$, and
$\hat{\boldsymbol\chi}\gets\boldsymbol\chi^{(0)}$.\;

\While{$e_{\mathrm{stop}}>\tau_{\mathrm{err}}$ \textbf{\rm and} $n\le M$}{

Update the Born-iterative sensing matrices
$\{\mathbf A_k^{(n)}\}_{k=1}^{K}$ from \eqref{eq:Ak} using
$\boldsymbol\chi^{(n-1)}$; the initialization
$\boldsymbol\chi^{(0)}=\mathbf 0$ corresponds to the first-Born
initialization.\;

Construct the ROI-restricted sensing matrix:
\[
\mathbf A^{\mathrm{sub},(n)}
\gets
\begin{bmatrix}
[\mathbf A_1^{(n)}]_{[:,\mathcal I]}\\
\vdots\\
[\mathbf A_K^{(n)}]_{[:,\mathcal I]}
\end{bmatrix}.
\]

Convert it into the real-valued form:
\[
\widetilde{\mathbf A}^{\mathrm{sub},(n)}
\gets
\begin{bmatrix}
\Re\{\mathbf A^{\mathrm{sub},(n)}\}\\
\Im\{\mathbf A^{\mathrm{sub},(n)}\}
\end{bmatrix}.
\]

Construct the ROI graph Laplacian $\mathbf L_R$.\;

Solve the reduced real-valued ROI-QP in \eqref{eq:opt_real} using
$\widetilde{\mathbf A}^{\mathrm{sub},(n)}$, $\widetilde{\mathbf Y}$,
and $\mathbf L_R$, and obtain $\boldsymbol\chi^{\mathrm{sub},(n)}$.\;

Extend the ROI estimate to the full domain:
\[
\boldsymbol\chi^{(n)}[\mathcal I]
\gets
\boldsymbol\chi^{\mathrm{sub},(n)},
\qquad
\boldsymbol\chi^{(n)}[\mathcal I^c]
\gets
0.
\]

Set $\hat{\boldsymbol\chi}\gets\boldsymbol\chi^{(n)}$ and update
\[
e_{\mathrm{stop}}
\gets
\|\boldsymbol\chi^{\mathrm{sub},(n)}
-
\boldsymbol\chi^{\mathrm{sub},(n-1)}\|_2,
\qquad
n\gets n+1.
\]
}

\Return the latest estimate $\hat{\boldsymbol\chi}$.\;
\end{algorithm}

Using the \ac{ROI} indices, we form the reduced sensing matrix at each frequency by selecting the corresponding columns:
\begin{equation}
\label{eq:submatrix}
\mathbf A_k^{\mathrm{sub}}
\triangleq
\left[\mathbf A_k\right]_{[:,\mathcal I]}
\in
\mathbb C^{TN_r\times P}.
\end{equation}
The multi-frequency observations and reduced sensing matrices are then stacked as
\begin{equation}
\label{eq:full}
\mathbf Y
\triangleq
\begin{bmatrix}
\mathbf y_1\\
\vdots\\
\mathbf y_K
\end{bmatrix},
\qquad
\mathbf A^{\mathrm{sub}}
\triangleq
\begin{bmatrix}
\mathbf A_1^{\mathrm{sub}}\\
\vdots\\
\mathbf A_K^{\mathrm{sub}}
\end{bmatrix},
\end{equation}
where \(\mathbf Y\in\mathbb C^{KTN_r\times 1}\) and \(\mathbf A^{\mathrm{sub}}\in\mathbb C^{KTN_r\times P}\).

For simplicity, this work focuses on relative-permittivity reconstruction; therefore, the unknown contrast variables are treated as real-valued vectors.
Let \(\boldsymbol\chi^{\mathrm{sub}}\triangleq\boldsymbol\chi_{\mathcal I}\in \mathbb R^P\) denote the contrast vector restricted to the \ac{ROI}.
The original full-domain inverse problem is then replaced by the following real-valued \ac{ROI}-restricted model:
\begin{equation}
\label{eq:subproblem}
\widetilde{\mathbf Y}
=
\widetilde{\mathbf A}^{\mathrm{sub}}\boldsymbol\chi^{\mathrm{sub}}
+
\mathbf d,
\end{equation}
where \(\widetilde{\mathbf Y}\) and \(\widetilde{\mathbf A}^{\mathrm{sub}}\) denote the real-valued representations of the stacked observation vector \(\mathbf Y\) and the
\ac{ROI}-restricted sensing matrix \(\mathbf A^{\mathrm{sub}}\), respectively.
The vector \(\mathbf d\in\mathbb R^{2KTN_r\times 1}\) denotes the effective residual accounting for observation noise, \ac{ROI} truncation error, and model mismatch.

We solve the following regularized \ac{QP}:
\begin{equation}
\label{eq:opt_real}
\begin{aligned}
\min_{\boldsymbol\chi^{\mathrm{sub}}\in\mathbb R^P,\,\mathbf d\in\mathbb R^{2KTN_r}}
\quad
&
\frac{1}{2}\|\mathbf d\|_2^2
+
\frac{\alpha}{2}\|\boldsymbol\chi^{\mathrm{sub}}\|_2^2
+
\frac{\beta}{2}
(\boldsymbol\chi^{\mathrm{sub}})^{\mathrm T}
\mathbf L_R
\boldsymbol\chi^{\mathrm{sub}}
\\
\mathrm{s.t.}\quad
&
\widetilde{\mathbf A}^{\mathrm{sub}}
\boldsymbol\chi^{\mathrm{sub}}
-
\widetilde{\mathbf Y}
+
\mathbf d
=
\mathbf 0,
\end{aligned}
\end{equation}
where \(\alpha\ge0\) controls the Tikhonov regularization and \(\beta\ge0\) controls the graph-Laplacian smoothness over the \ac{ROI}. 
The \ac{ROI} graph Laplacian is defined as
\begin{equation}
\label{eq:roi_laplacian}
\mathbf L_R
=
\mathbf Q-\mathbf W_R,
\end{equation}
where \(\mathbf W_R\) denotes the adjacency matrix of the \ac{ROI} graph, and \(\mathbf Q=\operatorname{diag}(\mathbf W_R\mathbf 1)\) is the corresponding degree matrix~\cite{b44}.

The complex-valued observation model is converted into the real-valued form above by separating the real and imaginary parts.
This real-valued formulation is used in the \ac{ROI}-\ac{QP} reconstruction summarized in
Algorithm~\ref{alg:roi_qp_method}.

Using the \ac{ROI} prior, the reconstruction is performed only in a reduced subspace of size \(P=|\mathcal I|\), where \(P\ll N\).
The \ac{LSM}-based \ac{ROI} selection has complexity \(O\!\left(K(N_t^3+TNN_rN_t)\right)\).

After the \ac{ROI} is obtained, each outer iteration constructs the \ac{ROI}-restricted sensing operator and solves the reduced \ac{QP} with complexity \(O\!\left(K(P^3+TP^2+T(N_t+N_r)P)\right)\).
Therefore, for \(M\) outer iterations, the overall complexity is
\begin{eqnarray}
\!\!\!\!&\!\!\!\!&\!\!\!\!
O\Bigl(
K(N_t^3+TNN_rN_t) \nonumber \\
\!\!\!\!&\!\!\!\!&\!\!\!\!
\qquad\qquad+
MK(P^3+TP^2+T(N_t+N_r)P)
\Bigr).
\end{eqnarray}
Since \(P\ll N\), the \ac{ROI}-constrained update provides a substantial computational reduction compared with full-domain reconstruction.

\subsection{Conditioning Implication of ROI Restriction}

In the preceding subsection, the LSM indicator is used to construct a
coarse ROI index set \(\mathcal I\), and the subsequent ROI-QP update is
performed only over this reduced subspace. In other words, the original
full-domain inverse problem is replaced by an ROI-restricted subproblem
with \(P=|\mathcal I|\) unknowns. To explain why this subspace restriction
improves numerical stability, we briefly characterize its conditioning
effect from the viewpoint of effective column coherence.

The LSM-derived ROI may not exactly coincide with the true ASR; it may
exclude part of the scatterer support or include a small number of
background pixels. Therefore, we do not assume a perfectly matched ROI.
For a fixed frequency tone \(k\), let
\(\mathbf A^{\mathrm{sub}}=\mathbf A_k^{\mathrm{sub}}
\in\mathbb C^{TN_r\times P}\) denote the ROI-restricted sensing matrix
constructed in the preceding subsection. The multi-frequency case follows
by replacing \(\mathbf A^{\mathrm{sub}}\) with the stacked operator in
\eqref{eq:full}.

Let \(\Omega=\{1,\ldots,N\}\) denote the full pixel set and
\(S_{\mathrm{true}}\subset\Omega\) denote the true ASR index set, with
\(k_{\mathrm{true}}=|S_{\mathrm{true}}|\). For the LSM-selected ROI index
set \(\mathcal I\subset\Omega\), define
\begin{equation}
\mathcal T
\triangleq
\mathcal I\cap S_{\mathrm{true}},
\qquad
\tau
\triangleq
|\mathcal T|.
\end{equation}
Here, \(\mathcal T\) denotes the part of the ROI that truly belongs to
the ASR. The corresponding recall and precision are
\begin{equation}
\mathcal R
\triangleq
\frac{\tau}{k_{\mathrm{true}}},
\qquad
p_{\mathrm{prec}}
\triangleq
\frac{\tau}{P}.
\end{equation}
Thus, the ROI size can be written as
\begin{equation}
P
=
\frac{\mathcal R}{p_{\mathrm{prec}}}
k_{\mathrm{true}}.
\end{equation}
This relation shows that, for a fixed true ASR size, a lower precision
corresponds to more background pixels included in the ROI and hence a
larger reduced subproblem.

Next, normalize the ROI sub-operator as
\begin{equation}
\bar{\mathbf A}^{\mathrm{sub}}
=
\mathbf A^{\mathrm{sub}}\mathbf D^{-1},
\qquad
\mathbf D
=
\operatorname{diag}
\big(
\|\mathbf a_1\|_2,\ldots,\|\mathbf a_P\|_2
\big).
\end{equation}
The normalized Gram matrix is
\begin{equation}
\mathbf\Phi_{\mathcal I}
\triangleq
(\bar{\mathbf A}^{\mathrm{sub}})^{\mathrm H}
\bar{\mathbf A}^{\mathrm{sub}}.
\end{equation}
Since \(\bar{\mathbf A}^{\mathrm{sub}}\) is column-normalized, the diagonal
entries of \(\mathbf\Phi_{\mathcal I}\) are one. Define the effective
mutual coherence inside the ROI as
\begin{equation}
\mu_{\mathrm{eff}}
\triangleq
\max_{i\neq j}
\left|
[\mathbf\Phi_{\mathcal I}]_{ij}
\right|.
\end{equation}
Then the \(j\)-th Gershgorin radius of \(\mathbf\Phi_{\mathcal I}\)
satisfies
\begin{equation}
R_j
=
\sum_{i\neq j}
\left|
[\mathbf\Phi_{\mathcal I}]_{ji}
\right|
\le
(P-1)\mu_{\mathrm{eff}}.
\end{equation}
By the Gershgorin disk theorem, under the sufficient condition
\((P-1)\mu_{\mathrm{eff}}<1\), the condition number of the normalized
ROI-restricted operator is upper bounded by
\begin{equation}
\label{eq:roi_condition_bound}
\begin{aligned}
\kappa(\bar{\mathbf A}^{\mathrm{sub}})
&\le
\sqrt{
\frac{1+(P-1)\mu_{\mathrm{eff}}}
     {1-(P-1)\mu_{\mathrm{eff}}}
} \\
&=
\sqrt{
\frac{
1+\left(\frac{\mathcal R}{p_{\mathrm{prec}}}k_{\mathrm{true}}-1\right)
\mu_{\mathrm{eff}}
}{
1-\left(\frac{\mathcal R}{p_{\mathrm{prec}}}k_{\mathrm{true}}-1\right)
\mu_{\mathrm{eff}}
}
}.
\end{aligned}
\end{equation}

Equation~\eqref{eq:roi_condition_bound} directly explains the stabilizing
effect of the ROI-QP formulation. The ROI restriction reduces the number
of unknowns and tends to remove many highly coherent
background columns, thereby reducing the effective coherence
\(\mu_{\mathrm{eff}}\). Consequently, compared with the full-domain
formulation, the ROI-restricted formulation can yield a tighter
condition-number bound.

Even if the LSM-derived ROI still contains a small number of background
pixels, the effective coherence remains much smaller than that of the
full-domain operator as long as ASR-related columns dominate the reduced
subspace. Therefore, the ROI-QP step is not merely a computational
dimensionality reduction. It is consistent with the preceding
operator-coherence analysis: by suppressing the redundant background
subspace, ROI restriction improves the conditioning of the inverse problem
and stabilizes the subsequent material reconstruction.

\vspace{-1em}
\section{Numerical Simulations and Results}

\subsection{Simulation Setup}

In this section, the \ac{FDTD} method is employed to generate \ac{EM} signals for sensing the scatterers~\cite{b43}. For simplicity, we assume non-magnetic media and focus on reconstructing the dielectric constitutive parameters, where the unknown contrast $\boldsymbol\chi$ is fully characterized by the permittivity. The permeability is fixed and not estimated.
As illustrated in Fig.~\ref{fig:UCA}, a \ac{UCA} of radius $10\,\mathrm{m}$ surrounds the sensing region.
The simulations use full-wave modeling under the TM$_z$ polarization, yielding a 2-D setting that corresponds to infinitely long $z$-directed scatterers with cross-sectional permittivity contrast to be reconstructed.

The \ac{UCA} has a radius of $10\,\mathrm{m}$ and consists of $N_t=N_r=60$ antenna elements, uniformly spaced, and used for both transmission and reception.

The center frequency is set to $f_c = 28\,\mathrm{GHz}$, and frequency spacing $\Delta f = 100\,\mathrm{MHz}$.
We probe $K = 32$ frequencies $\{f_k\}_{k=1}^{K}$ and allocate $T = 8$ pilot symbols for each frequency to obtain pilot-aided channel observations.
The iterative reconstruction process is executed for $M=10$ iterations.

The sensing region $D$ is uniformly discretized into a $36 \times 36$ pixel grid centered at the origin.
In all simulations of this section, the parameters used for threshold computation of $\eta$ are fixed as $\epsilon =10^{-4}$ and $q_{\mathrm{trim}}=0.05$.
Numerical simulations are conducted using the FDTD method to obtain EM field for four cases of scatterers.

\begin{figure}[t]
  \centering
  \includegraphics[width=0.65\columnwidth]{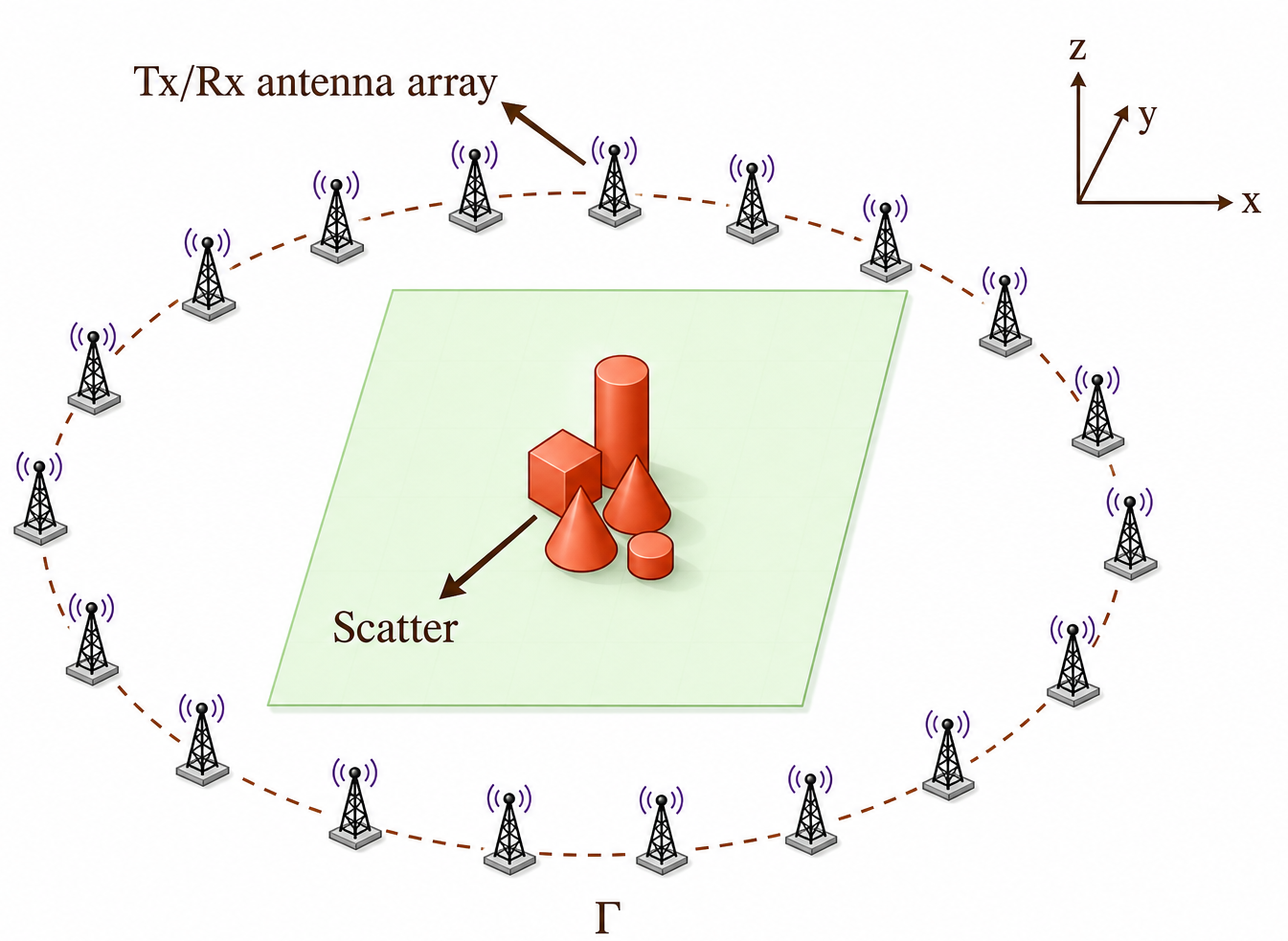}
  \caption{Schematic diagram of a simplified UCA array.}
  \label{fig:UCA}
\end{figure}
\begin{figure}[t]
  \centering
  \includegraphics[width=0.65\columnwidth]{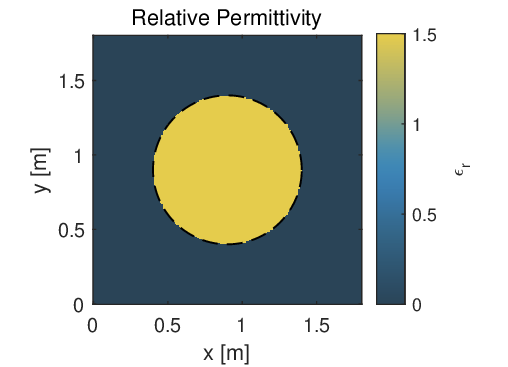}
  \caption{Geometry of the circular scatterer with a radius of $0.5\,\mathrm{m}$.}
  \label{fig:CS}
  \vspace{3mm}
\end{figure}

\vspace{-1em}
\subsection{Condition Number and Complexity}

Fig.~\ref{fig:CS} shows the simulation setup, where a $1.8$ m $\times$ $1.8$ m square sensing region contains a circular scatterer with radius $0.5$ m and relative permittivity $1.5$.
To ensure reproducibility, the \ac{ROI} sizes used in the simulations are generated deterministically rather than selected manually.
Let $\mathcal{L}_{\min}$ denote the side length of the smallest square enclosing the scatterer, and let $\mathcal{L}_{\max}$ denote the side length of the full square sensing region.
For a total of $L$ shrinking steps, the \ac{ROI} side length at step $o$ is defined as
\begin{equation}\label{eq:side_length}
\begin{aligned}
\mathcal{L}_o
&= \operatorname{round}\left(
\mathcal{L}_{\max} - \frac{o-1}{L-1}
\bigl(\mathcal{L}_{\max}-\mathcal{L}_{\min}\bigr)
\right), \\
&\hspace{-1.5em} o=1,\ldots, L ,
\end{aligned}
\end{equation}
where $\operatorname{round}(\cdot)$ denotes the nearest-integer rounding operation.
The corresponding \ac{ROI} pixel count is $P_o=\mathcal{L}_o^2$.
Therefore, once the parameter set $(\mathcal{L}_{\min},\mathcal{L}_{\max},L)$ is fixed, the entire \ac{ROI} sequence is uniquely determined.
This construction gradually reduces the reconstruction domain from the full sensing region to smaller square regions that still enclose the scatterer.

%%%%%%%%%%%%%%%%%%%%%%%%%%%%%%%%%%%%%%%%%%%
%% These are fig. 3 and 4.
\begin{figure}[t]
\centering
\includegraphics[width=0.8\columnwidth]{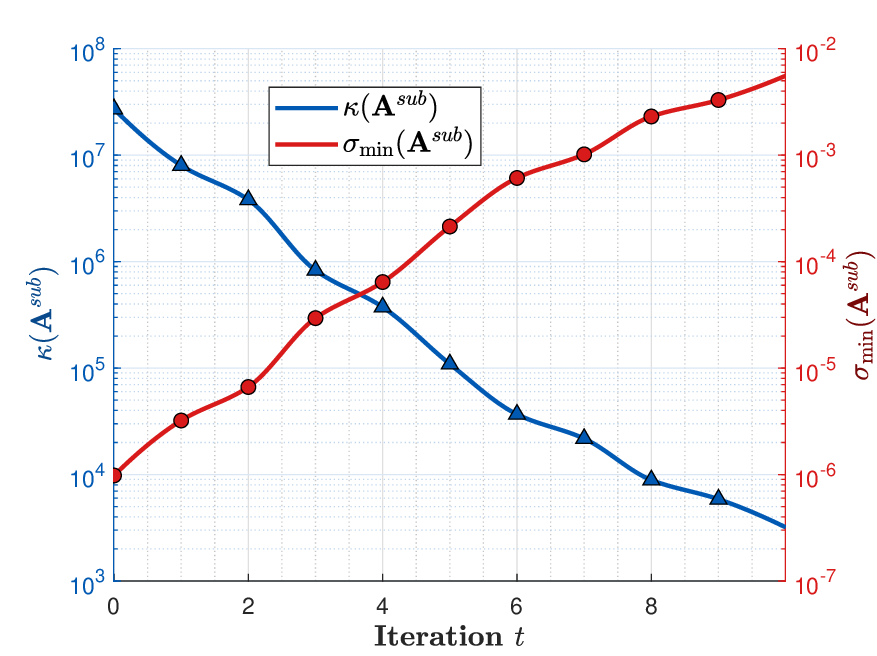}
\caption{Variation of the condition number and the smallest singular value.}
\label{fig:cond}
\end{figure}
\hspace{1\linewidth}
\begin{figure}[t]
\centering
\includegraphics[width=0.8\columnwidth]{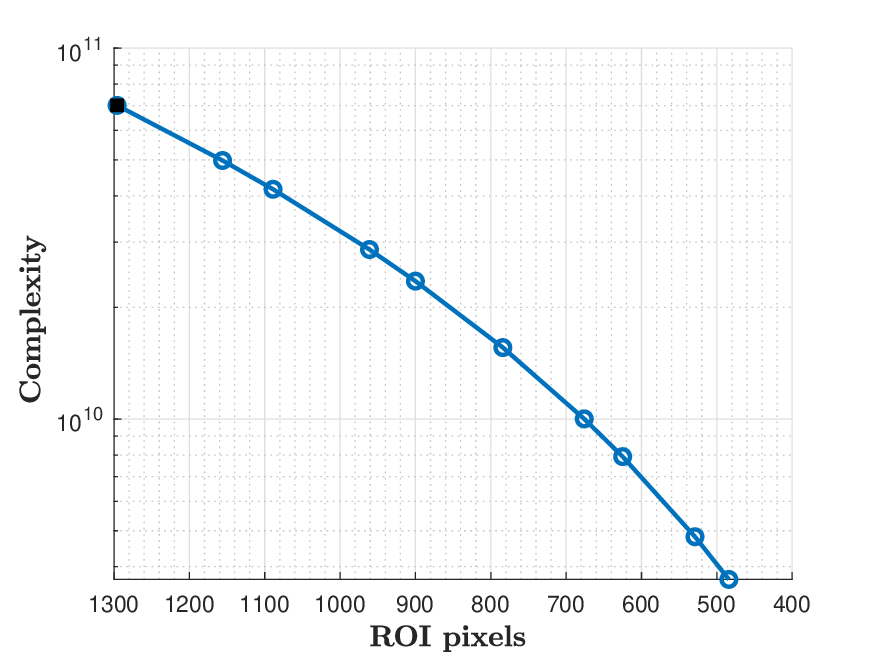}
\caption{Computational complexity versus the \ac{ROI} pixel size.}
\label{fig:complex}
\vspace{3mm}
\end{figure}
%%%%%%%%%%%%%%%%%%%%%%%%%%%%%%%%%%%%%%%%%%%

Fig.~\ref{fig:cond} evaluates the spectral effect of \ac{ROI} restriction by plotting the condition number and the smallest singular value of $\mathbf A^{\mathrm{sub}}$ during the shrinking process.
At $o=1$, the subspace operator $\mathbf A^{\mathrm{sub}}$ is identical to the full-domain operator $\mathbf A$, so no dimensional reduction is applied.
In this full-domain case, $\kappa(\mathbf A^{\mathrm{sub}})$ is on the order of $10^{7}$, while $\sigma_{\min}(\mathbf A^{\mathrm{sub}})$ is only about $10^{-6}$.
This indicates severe ill-conditioning and rapid singular-value decay.
From a spectral viewpoint, many solution-space directions are dominated by near-null modes, mainly caused by highly coherent air-region columns.
These modes amplify noise and lead to unstable \ac{CPR} estimates.

As the \ac{ROI} shrinks, redundant air-region columns are progressively removed from $\mathbf A^{\mathrm{sub}}$.
The remaining operator is therefore increasingly governed by \ac{ASR} columns, which carry the intrinsic \ac{EM} information of the scatterer.
This selective column removal improves the effective rank and preserves more singular values with non-negligible magnitudes.
Consequently, $\sigma_{\min}(\mathbf A^{\mathrm{sub}})$ increases from approximately $10^{-6}$ to $10^{-2}$, while $\kappa(\mathbf A^{\mathrm{sub}})$ decreases from $10^{7}$ to about $10^{3}$.
These results verify that \ac{ROI} restriction stabilizes the inverse problem by suppressing noise-sensitive background components.

Fig.~\ref{fig:complex} further shows the computational benefit of the same \ac{ROI} restriction.
As the reconstruction domain is reduced, the number of active pixels decreases from $1,296$ in the full domain to $497$.
Accordingly, the algorithmic complexity is reduced by nearly one order of magnitude.
Therefore, the \ac{ROI} constraint improves both the numerical conditioning and the computational efficiency of \ac{CPR}, since the reconstruction is focused on the physically meaningful scatterer region rather than the redundant air background.

\subsection{CPR Results}

To further assess the performance of the proposed method, we present
the \ac{CPR} results for dielectric contrast reconstruction under
different scenarios. The numerical evaluation is organized as follows.
First, we consider a single convex scatterer to verify the basic
reconstruction capability of the proposed \ac{ROI}-\ac{QP} framework.
Second, we examine a two-scatterer elliptical cluster to evaluate the
separation capability under closely spaced scatterers. Finally, we use
a non-convex T-shaped scatterer as a representative ablation case to
compare the full-domain \ac{BIM}, the \ac{LSM}-guided \ac{ROI}-\ac{BIM},
and the proposed \ac{LSM}-guided \ac{ROI}-\ac{QP}.

Specifically, our comparisons are designed to probe the impact of the
scattering operator in the \ac{ROI}-restricted linear
model~\eqref{eq:subproblem}, rather than to establish state-of-the-art
performance across different inversion paradigms. To this end, we employ
a conventional \ac{BIM} with Tikhonov regularization as a reference
implementation of the linearized update, because its reconstruction
behavior is primarily dictated by the conditioning of the associated
scattering operator. Unless otherwise specified, all methods are evaluated
under an identical forward model, the same noisy observation, and the
same number of outer Born iterations. The remaining simulation parameters
follow Section~VI-A.

\subsubsection{Single Scatterer: Triangular Case}

\begin{figure}[t]
  \centering

  % Row 1
  \subfloat[]{%
    \includegraphics[width=0.45\columnwidth]{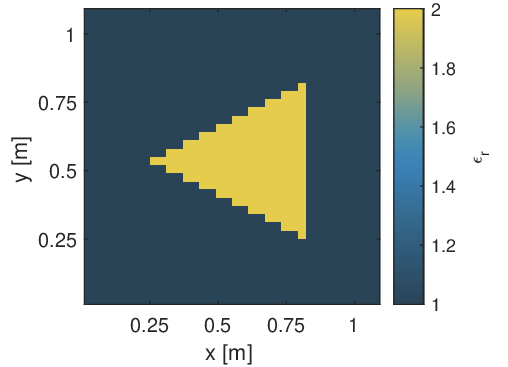}%
  }\hspace{-0.02\columnwidth}%
  \subfloat[]{%
    \includegraphics[width=0.45\columnwidth]{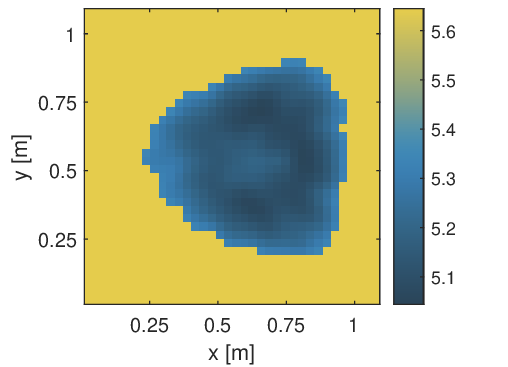}%
  }

  \vspace{-0.8em}

  % Row 2
  \subfloat[]{%
    \includegraphics[width=0.45\columnwidth]{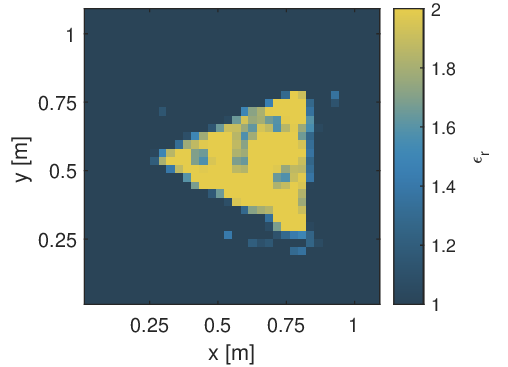}%
  }\hspace{-0.02\columnwidth}%
  \subfloat[]{%
    \includegraphics[width=0.45\columnwidth]{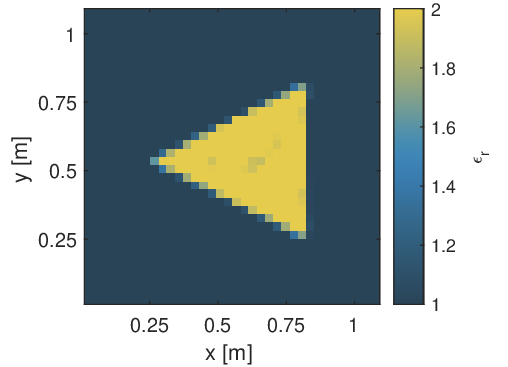}%
  }

  \vspace{-0.8em}

  % Row 3
  \subfloat[]{%
    \includegraphics[width=0.45\columnwidth]{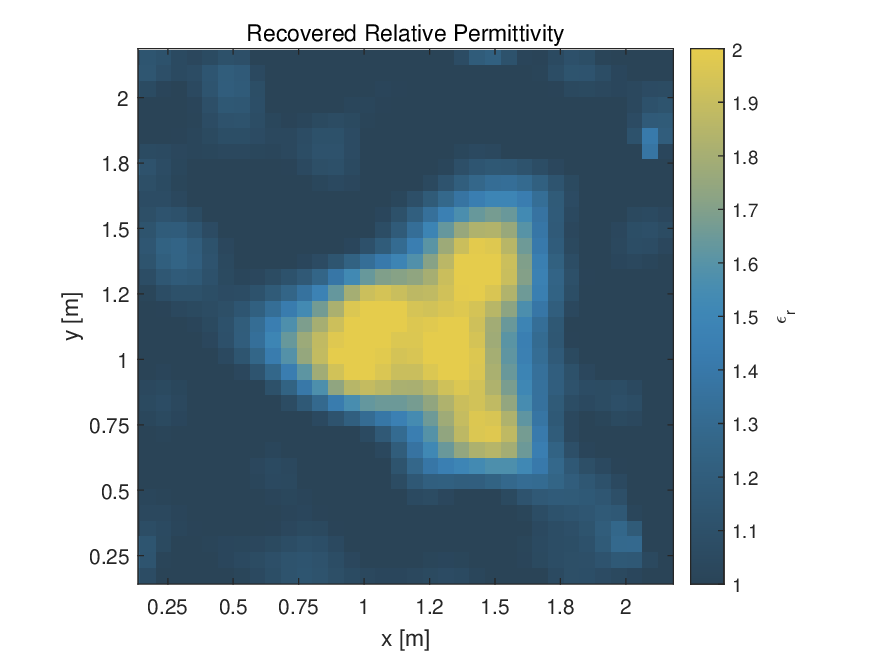}%
  }\hspace{-0.02\columnwidth}%
  \subfloat[]{%
    \includegraphics[width=0.45\columnwidth]{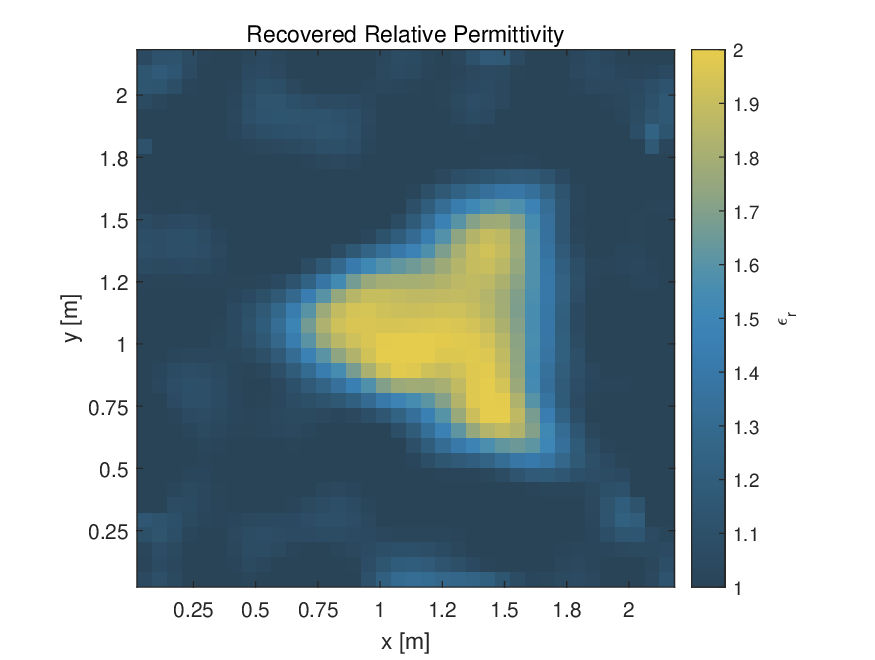}%
  }

  \caption{Single triangular scatterer reconstruction.
  (a) True permittivity; (b) \ac{ROI} obtained after truncation;
  (c)--(d) \ac{ROI}-\ac{QP} reconstructions with single-step Born
  linearization at $5$~dB and $30$~dB; (e)--(f) \ac{BIM}
  reconstructions at $5$~dB and $30$~dB.}
  \label{fig:fiveplots}
\end{figure}

The permittivity of the triangular scatterer is set to $2$.
Fig.~\ref{fig:fiveplots}(a) depicts an equilateral triangular scatterer
with an area of $0.65\,\mathrm{m}^2$. We first apply the \ac{LSM} under
$5$~dB noise with the regularization parameter $\zeta=10^{-3}$. After
thresholding, the \ac{ROI} shown in Fig.~\ref{fig:fiveplots}(b) is
extracted, which reduces the problem dimension from $1{,}296$ to $426$
pixels.

Within this \ac{ROI}, we run the \ac{ROI}-constrained \ac{QP} under both
$5$~dB and $30$~dB \acp{SNR}, as shown in Figs.~\ref{fig:fiveplots}(c)
and~\ref{fig:fiveplots}(d), respectively. In both cases, the proposed
\ac{ROI}-\ac{QP} successfully recovers the scatterer geometry, while the
higher-\ac{SNR} case further improves the accuracy of the reconstructed
internal permittivity. For comparison, Figs.~\ref{fig:fiveplots}(e)
and~\ref{fig:fiveplots}(f) report the corresponding \ac{BIM}
reconstructions. Under the same simulation setting, the
\ac{ROI}-constrained \ac{QP} yields more accurate \ac{CPR} results and
fewer background artifacts.

\subsubsection{Multiple Scatterers: Elliptical Cluster Case}

\begin{figure}[t]
  \centering

  % Row 1
  \subfloat[]{%
    \includegraphics[width=0.45\columnwidth]{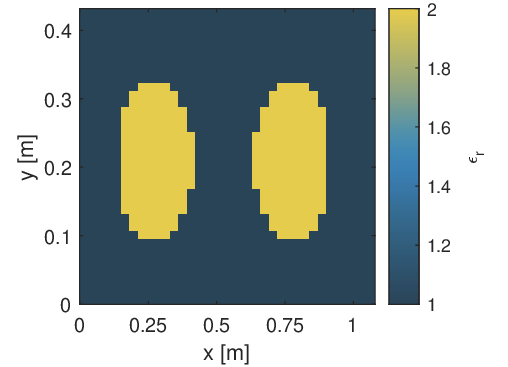}%
  }\hspace{-0.02\columnwidth}%
  \subfloat[]{%
    \includegraphics[width=0.45\columnwidth]{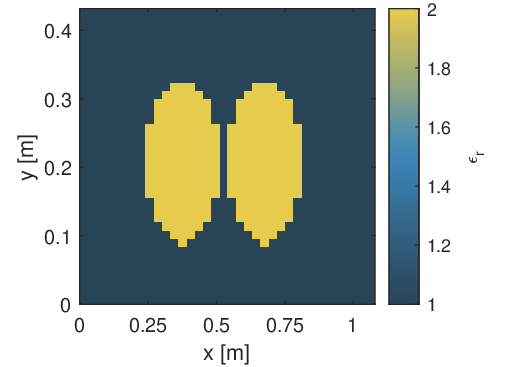}%
  }

  \vspace{-0.8em}

  % Row 2
  \subfloat[]{%
    \includegraphics[width=0.45\columnwidth]{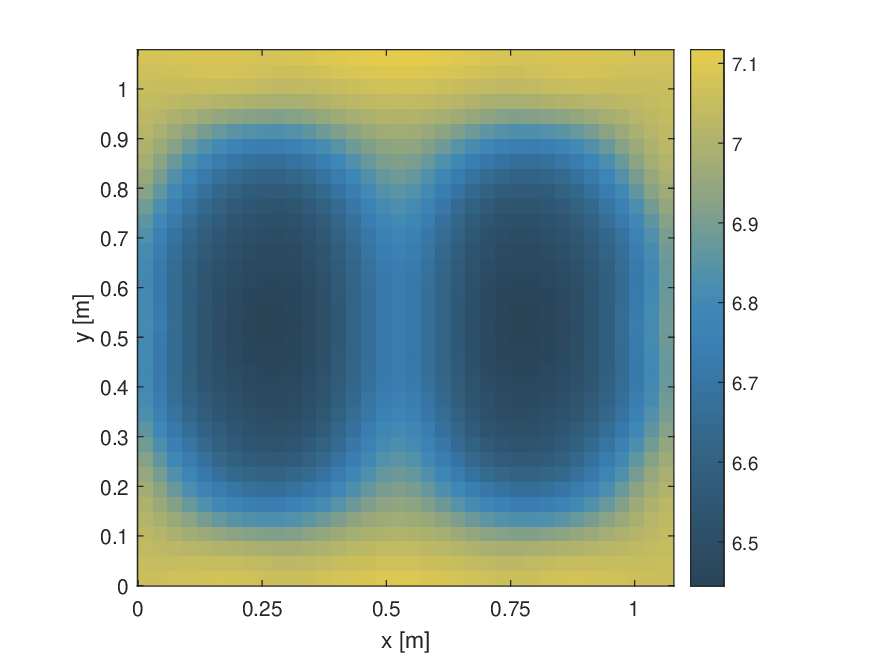}%
  }\hspace{-0.02\columnwidth}%
  \subfloat[]{%
    \includegraphics[width=0.45\columnwidth]{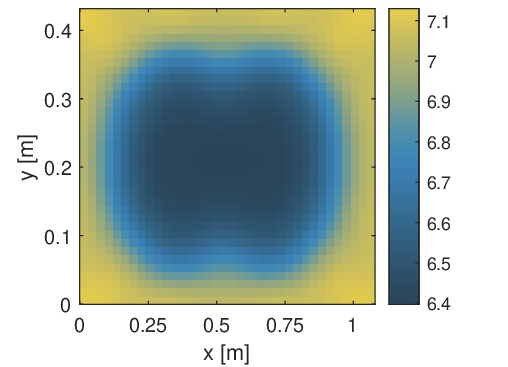}%
  }

  \vspace{-0.8em}

  % Row 3
  \subfloat[]{%
    \includegraphics[width=0.45\columnwidth]{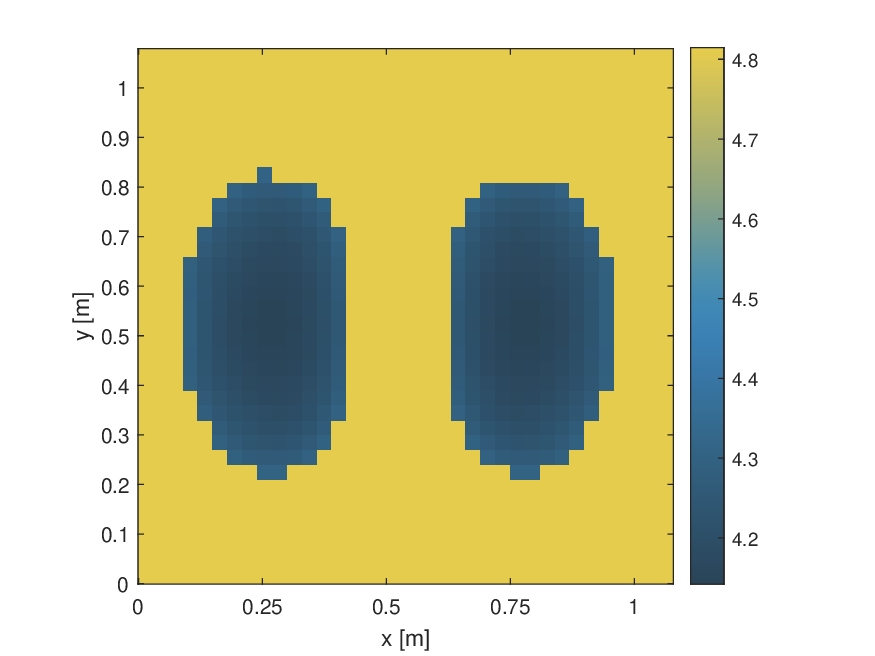}%
  }\hspace{-0.02\columnwidth}%
  \subfloat[]{%
    \includegraphics[width=0.45\columnwidth]{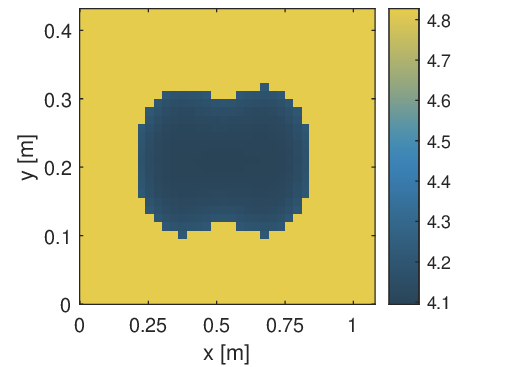}%
  }

  \vspace{-0.8em}

  % Row 4
  \subfloat[]{%
    \includegraphics[width=0.45\columnwidth]{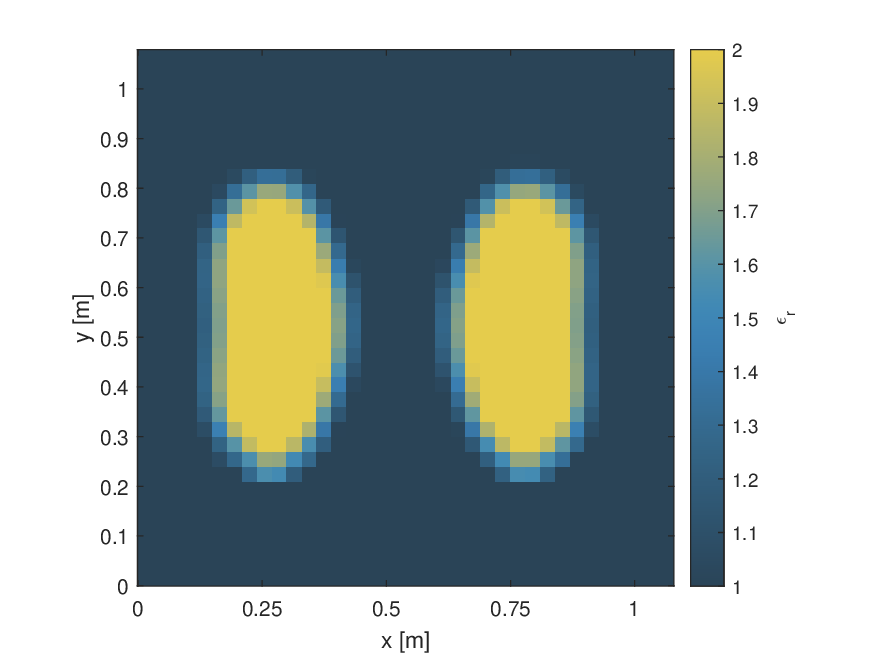}%
  }\hspace{-0.02\columnwidth}%
  \subfloat[]{%
    \includegraphics[width=0.45\columnwidth]{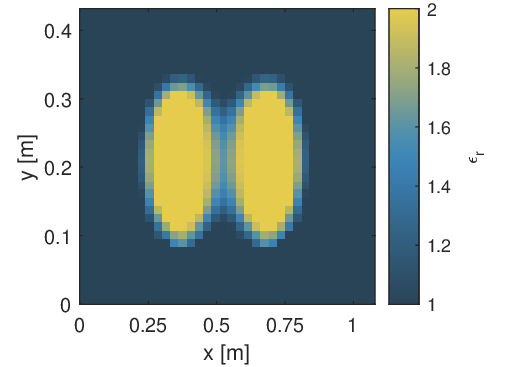}%
  }

  \caption{Two-ellipse scatterer reconstruction at $10$~dB.
  Left column: $0.3$~m spacing; right column: $0.1$~m spacing.
  (a)--(b) True permittivity distributions; (c)--(d) \ac{LSM}
  indicators; (e)--(f) truncated \acp{ROI}; (g)--(h)
  \ac{ROI}-\ac{QP} reconstructions.}
  \label{fig:8plots}
\end{figure}

Next, we evaluate the performance of the \ac{ROI}-\ac{QP} scheme for
a cluster of two closely spaced elliptical scatterers. The scatterers
have relative permittivity $2$. Each ellipse has a major axis of
$0.12\,\mathrm{m}$ and a minor axis of $0.055\,\mathrm{m}$, and the
center-to-center spacing is set to $0.3$~m and $0.1$~m, respectively.
The reconstructions are evaluated at $10$~dB \ac{SNR}.

The corresponding \ac{LSM} indicators with $\zeta=10^{-4}$ are shown
in Figs.~\ref{fig:8plots}(c) and~\ref{fig:8plots}(d). When applied to
a cluster of scatterers, the \ac{LSM} exhibits a noticeable
\textit{sticking} effect along the boundaries of adjacent objects. This
phenomenon arises because the \ac{LSM} tends to produce spatially
extended high-response regions around strong permittivity boundaries.
As the separation between scatterers approaches the resolution limit of
the sensing aperture, these regions overlap and merge, resulting in
apparent boundary adhesion and reduced object separability.

After applying the truncation rule, the \acp{ROI} corresponding to the
two spacing cases are obtained, as shown in Figs.~\ref{fig:8plots}(e)
and~\ref{fig:8plots}(f). For the $0.3$~m spacing, the \ac{ROI}
successfully encloses both ellipses within a moderately expanded
boundary, providing accurate geometric localization. In contrast, for
the $0.1$~m spacing, the \ac{ROI} can no longer fully resolve the two
ellipses, and partial adhesion occurs between their boundaries.

The \ac{ROI}-\ac{QP} reconstructions are shown in
Figs.~\ref{fig:8plots}(g) and~\ref{fig:8plots}(h). The proposed method
recovers both the shapes and permittivity values of the scatterers for
both inter-scatterer spacings, with only minor artifacts near the
boundaries. Even when the \ac{LSM}-derived \ac{ROI} suffers from boundary
adhesion at $0.1$~m spacing, this issue is largely alleviated during the
\ac{ROI}-\ac{QP} refinement. This improvement stems from the
data-consistency enforcement inherent in the Born iteration, which
continuously adjusts the reconstructed permittivity to minimize the
residual between the simulated and measured scattering fields. Through
this feedback mechanism, the algorithm adaptively distinguishes subtle
dielectric variations between the closely spaced scatterers, thereby
restoring their individual shapes and improving reconstruction fidelity.

Overall, these results highlight that the proposed \ac{ROI}-\ac{QP}
approach maintains robust separation capability and stable permittivity
recovery even in highly coupled and relatively low-\ac{SNR} environments,
where qualitative localization by the \ac{LSM} alone may become
ambiguous.

\subsubsection{Non-Convex Scatterer: T-Shaped Ablation Case}

\begin{figure}[t]
  \centering
  \hspace*{-1.8em}
  \begin{overpic}[
    width=1.1\columnwidth,
    trim={0 63 0 80},
    clip
  ]{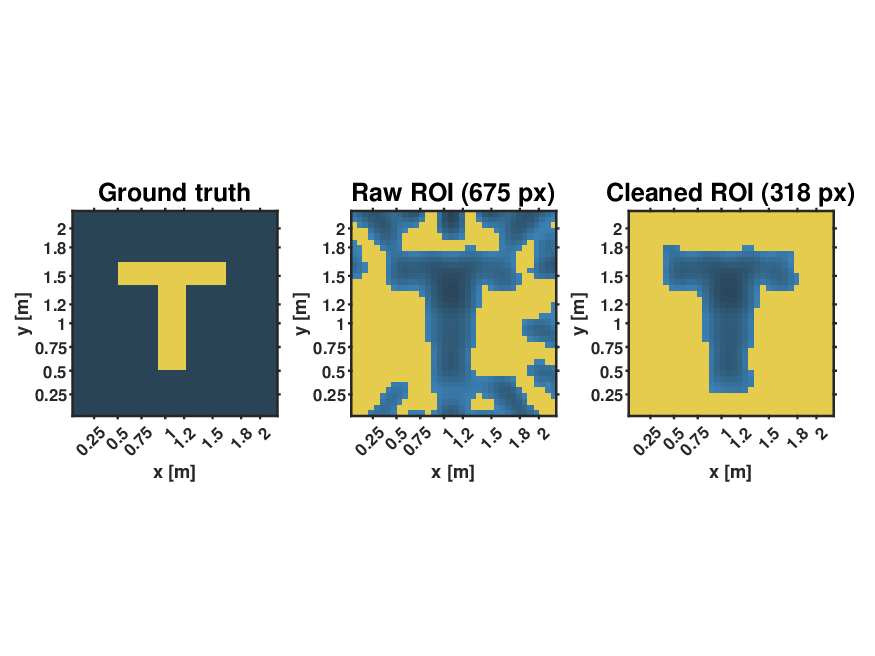}
    \put(18,1){\small (a)}
    \put(50,1){\small (b)}
    \put(82,1){\small (c)}
  \end{overpic}

  \caption{Cleaned LSM-derived ROI for the T-shaped scatterer:
  (a) ground truth; (b) raw \ac{ROI}; (c) cleaned \ac{ROI}.}
  \label{fig:tshape_lsm}
\end{figure}

\begin{figure}[t]
  \centering
  \hspace*{-1.1em}
  \begin{overpic}[
    width=1.05\columnwidth,
    trim={0 63 0 80},
    clip
  ]{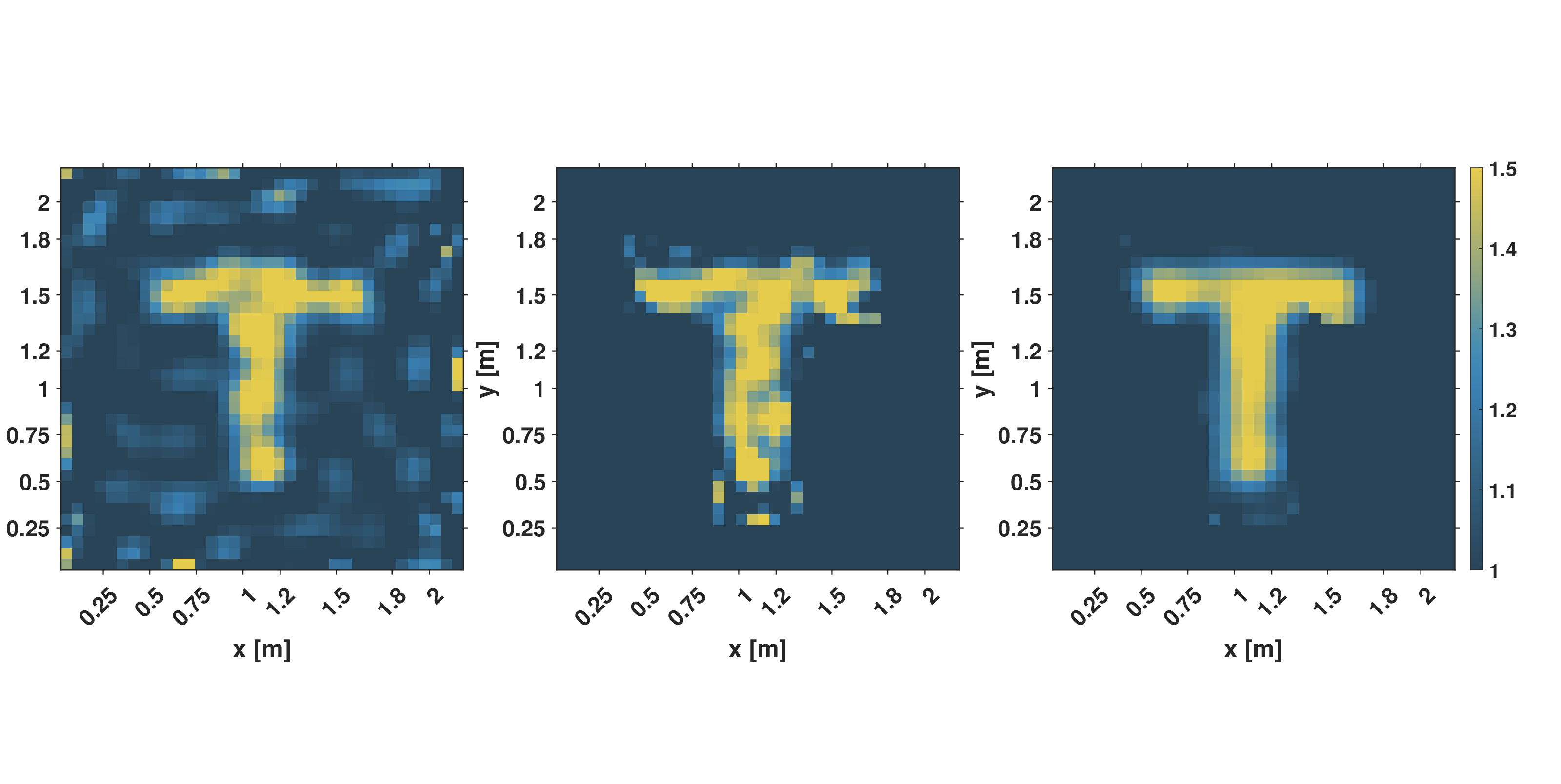}
    \put(16,0){\small (a)}
    \put(47,0){\small (b)}
    \put(79,0){\small (c)}
  \end{overpic}
  \caption{\ac{CPR} results of: (a) full-domain \ac{BIM}; (b)
  \ac{LSM}-guided \ac{ROI}-\ac{BIM}; (c) proposed \ac{ROI}-\ac{QP}.}
  \label{fig:tshape_cpr}
\end{figure}

\begin{figure}[t]
  \centering
  \begin{overpic}[
    width=1.1\columnwidth,
    trim={0 0 0 0},
    clip
  ]{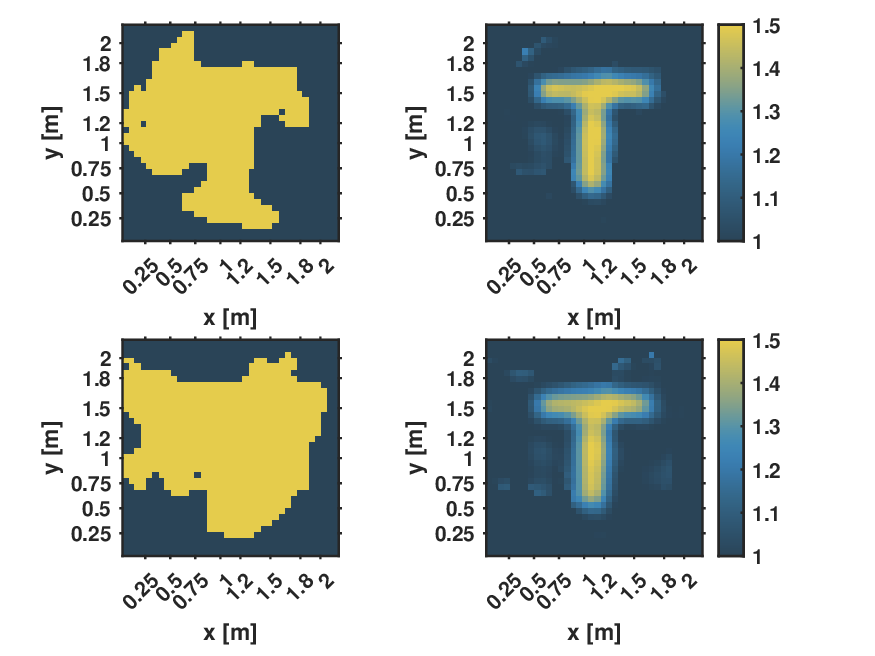}
    \put(24,75){\colorbox{white}{\small (a)}}
    \put(66,75){\colorbox{white}{\small (b)}}
    \put(24,-2){\small (c)}
    \put(66,-2){\small (d)}
  \end{overpic}
  \caption{Extended \ac{ROI} and corresponding \ac{CPR} results:
  (a) extended \ac{ROI} with $550$ pixels; (b) \ac{CPR} result of the
  $550$-pixel \ac{ROI}; (c) extended \ac{ROI} with $700$ pixels;
  (d) \ac{CPR} result of the $700$-pixel \ac{ROI}.}
  \vspace{2mm}
  \label{fig:roi_extendD}
\end{figure}

We further evaluate the proposed method on a non-convex T-shaped scatterer.
As shown in Fig.~\ref{fig:tshape_lsm}(a), the target has an arm length of $1.1$~m along both axes, a width of $0.24$~m, and a relative permittivity of $1.5$, while the background relative permittivity is $1.0$.
This case is challenging because the non-convex geometry and low-\ac{SNR} measurements can lead to ambiguous \ac{LSM}-based localization.

Fig.~\ref{fig:tshape_lsm}(b) shows the raw \ac{ROI} obtained from the \ac{LSM} indicator.
Due to noise and the non-convex target shape, the raw \ac{ROI} contains boundary artifacts and isolated clutter pixels.
Directly including these pixels in the inversion subspace may introduce irrelevant degrees of freedom and degrade the conditioning of the linearized inverse problem.
Therefore, after trimmed maximum-gap thresholding, we apply a simple \ac{ROI} refinement step. 
The raw \ac{LSM} \ac{ROI} is defined as
\begin{equation}
\Omega_{\rm raw}
=
{\mathbf r:\mathcal J_K(\mathbf r)\le \eta},
\end{equation}
where $\eta$ is determined by the trimmed maximum-gap rule applied to the values of $\mathcal J_K(\mathbf r)$ over the discretized sensing domain.
Since the \ac{LSM} solution norm is typically smaller inside the scatterer support, a smaller value of $\mathcal J_K(\mathbf r)$ indicates a more scatterer-like location.

We then retain the most reliable low-indicator region
\begin{equation}
\Omega_{\rm core}
=
{\mathbf r\in\Omega_{\rm raw}:
\mathcal J_K(\mathbf r)\le \tau_{\rm core}},
\end{equation}
where $\tau_{\rm core}= 0.4$ is selected as a fixed percentile threshold of $\mathcal J_K(\mathbf r)$ within $\Omega_{\rm raw}$.
Small isolated components and boundary-connected artifacts are removed, the main connected component is slightly expanded within the original \ac{LSM} support, and
interior holes are filled.
This refinement uses only the \ac{LSM} indicator and the raw \ac{ROI}, without using ground-truth contrast values or target labels.
The cleaned \ac{ROI} is shown in Fig.~\ref{fig:tshape_lsm}(c), which preserves the main T-shaped support while suppressing most irrelevant artifacts.

Using this cleaned \ac{ROI}, we compare the proposed method with two representative baselines at $0$~dB \ac{SNR}: full-domain \ac{BIM} with $\ell_2$ regularization and \ac{LSM}-guided \ac{ROI}-\ac{BIM}.
All methods use the same \ac{CSI}-induced forward model, noisy observations, and number of outer Born iterations, so that the comparison mainly reflects the effects of \ac{ROI} restriction and the proposed \ac{QP} formulation.

The reconstruction results are shown in Fig.~\ref{fig:tshape_cpr}. 
Compared with the full-domain \ac{BIM} in Fig.~\ref{fig:tshape_cpr}(a), the \ac{LSM}-guided \ac{ROI}-\ac{BIM} in Fig.~\ref{fig:tshape_cpr}(b) reduces background artifacts and improves the contrast profile.
This agrees with the conditioning analysis: removing irrelevant background pixels reduces the effective dimension of the inverse problem and suppresses noise-driven artifacts.
As shown in Fig.~\ref{fig:tshape_cpr}(c), the proposed \ac{ROI}-\ac{QP} further produces a cleaner T-shaped reconstruction and better recovers the target contrast value.
This improvement comes from the joint effect of \ac{LSM}-guided \ac{ROI} restriction and the regularized \ac{QP} formulation, which balances data consistency, Tikhonov regularization, and graph-Laplacian smoothness over the \ac{ROI}.

For quantitative comparison, we use the \ac{NMSE} of the reconstructed contrast vector,
\begin{equation}
\label{NMSE}
\mathrm{NMSE}
=
10 \log_{10}
\left(
\frac{\|\hat{\boldsymbol{\chi}}-\boldsymbol{\chi}\|_2^2}
{\|\boldsymbol{\chi}\|_2^2}
\right).
\end{equation}
We also report the numerical rank of the sensing operator, defined as
\begin{equation}
\label{eq:numerical_rank}
r_{\rm num}
\triangleq
\left|\left\{
i:\sigma_i \ge \epsilon_{\rm rank}\sigma_{\max}
\right\}\right|,
\end{equation}
where $\sigma_i$ denotes the $i$th singular value and
$\epsilon_{\rm rank}=10^{-6}$.

Table~\ref{tab:tshape-performance} summarizes the comparison.
The full-domain \ac{BIM} has the largest numerical rank, since it solves the inverse problem over the entire sensing grid. After \ac{ROI} restriction, the numerical rank is reduced from $522$ to $174$ for both \ac{ROI}-\ac{BIM} and \ac{ROI}-\ac{QP}, indicating a lower-dimensional inverse problem. 
In terms of reconstruction accuracy, \ac{ROI}-\ac{BIM} improves the \ac{NMSE} from $-6.10$~dB to $-7.27$~dB, while the proposed \ac{ROI}-\ac{QP} further improves it to $-9.38$~dB.

\begin{table}[t]
\centering
\caption{Comparison with representative inverse-scattering baselines for
the T-shaped scatterer.}
\label{tab:tshape-performance}
\hspace*{-1.4em}
\begin{tabular}{lccc}
\hline
Method & Numerical rank & NMSE (dB) & Running time (s) \\
\hline
Full-domain BIM & 522 & $-6.10$ & $91.09$ \\
LSM-guided ROI-BIM & 174 & $-7.27$ & $17.38 + 0.59$ \\
Proposed ROI-QP & 174 & $-9.38$ & $15.56 + 0.59$ \\
\hline
\end{tabular}
\vspace{5mm}
\end{table}

We also report the absolute running time as a numerical reference. 
Because the original $10$~m \ac{UCA} setting makes the \ac{FDTD} simulation time for this case excessively long, the time comparison is conducted under a reduced setting with a $0.5$~m \ac{UCA} radius and adjusted \ac{FDTD} parameters.
This reduced setting is used only for the absolute-time comparison.
The simulations were performed in MATLAB R2024b on a computer with a $16$-core
12th Gen Intel Core i7-12650H CPU.
The full-domain \ac{BIM} takes $91.09$~s, whereas \ac{LSM}-guided \ac{ROI}-\ac{BIM} and the proposed \ac{ROI}-\ac{QP} take $17.38+0.59$~s and $15.56+0.59$~s, respectively, where
$0.59$~s is the \ac{LSM}-based \ac{ROI} selection time.
Therefore, these absolute values are used only as a reference for computational efficiency under the reduced-radius and adjusted-grid setting.

Finally, we test the sensitivity of \ac{ROI}-\ac{QP} to \ac{ROI} mismatch.
Starting from the cleaned \ac{ROI} in Fig.~\ref{fig:tshape_lsm}(c), we randomly and smoothly enlarge the \ac{ROI} from $318$ pixels to $550$ and $700$ pixels under $\mathrm{SNR}=5$~dB, as shown in Fig.~\ref{fig:roi_extendD}.
The enlarged \acp{ROI} still cover the target but include different amounts of background pixels.
The proposed \ac{ROI}-\ac{QP} remains effective, achieving \ac{NMSE} values of $-8.12$~dB and $-7.56$~dB for the $550$-pixel and $700$-pixel \acp{ROI}, respectively.
The degradation for the larger \ac{ROI} is consistent with our analysis, since additional background pixels increase the effective dimension and may introduce edge artifacts.

\subsection{Normalized Mean Square Error Analysis}

In this subsection, we validate the above conclusions by examining how
the \ac{NMSE} of all \ac{CPR} simulation results varies with the
\ac{ROI} size and the noise level in dB; see
Figs.~\ref{fig:nmse-tri}--\ref{fig:nmse-ell}. In particular, when
investigating the \ac{NMSE}--\ac{ROI}-pixel trend, we fix the number of
\ac{QP} iterations to one, i.e., a single Born step, to control for any
additional performance gain introduced by the \ac{QP} refinement itself,
so that the observed variation mainly reflects the improvement brought by
the \ac{ROI} constraint. In contrast, for the \ac{NMSE}--\ac{SNR}
results, the \ac{QP} refinement is executed with the nominal iteration
setting to report the performance of the full reconstruction pipeline.

\begin{figure}[t]
  \centering
  \subfloat[]{%
    \includegraphics[width=0.75\columnwidth,trim=10 0 10 8,clip]{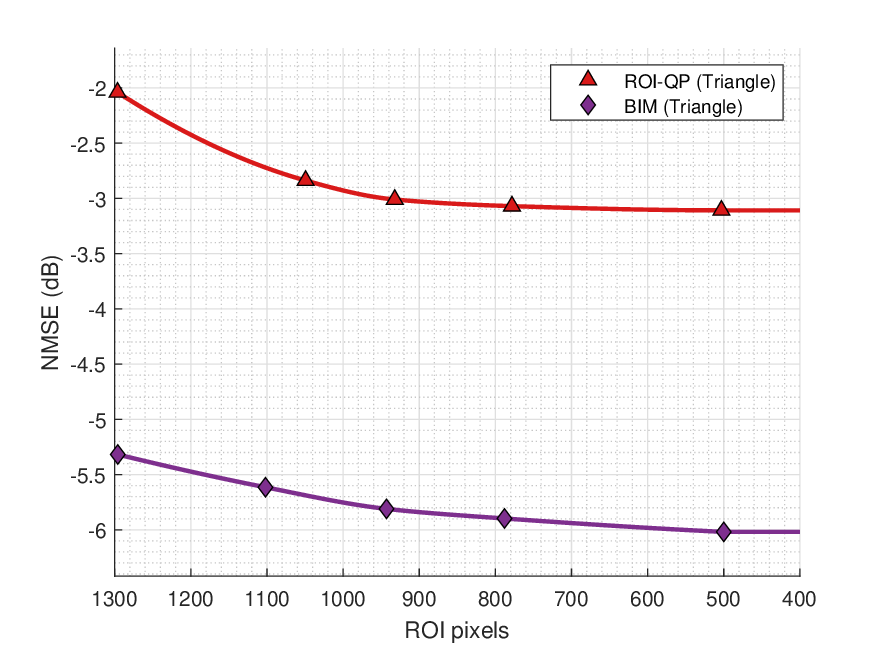}%
  }\hfill
  \subfloat[]{%
    \includegraphics[width=0.75\columnwidth,trim=10 0 10 8,clip]{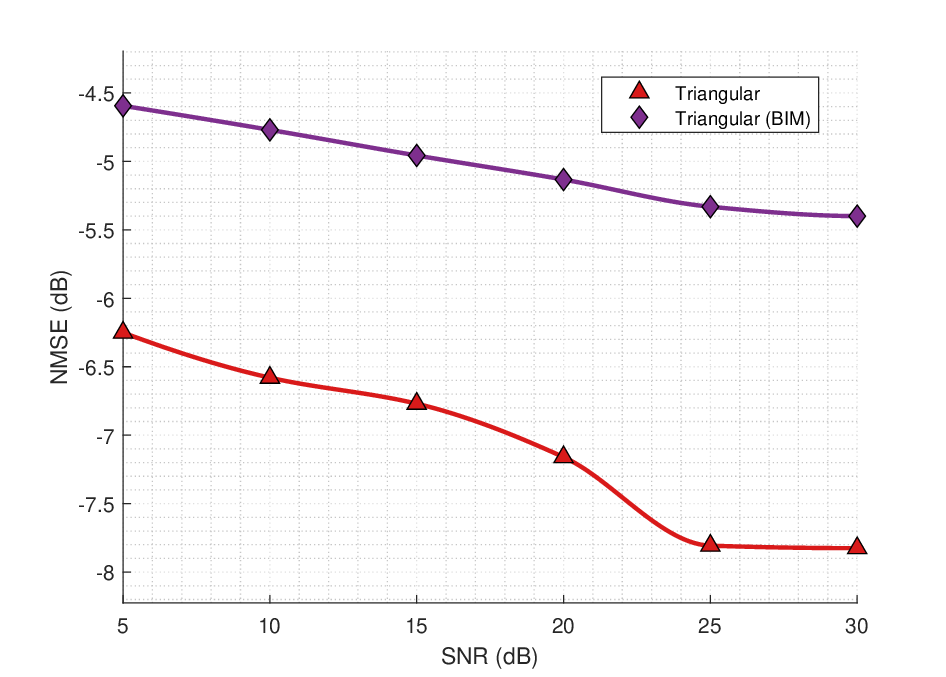}%
  }
  \caption{NMSE results for the triangular scatterer: (a) NMSE versus ROI pixels, and (b) NMSE versus SNR.}
  \label{fig:nmse-tri}
  \vspace{3mm}
\end{figure}

\begin{figure}[t]
  \centering
  \subfloat[]{%
    \includegraphics[width=0.75\columnwidth,trim=10 0 10 8,clip]{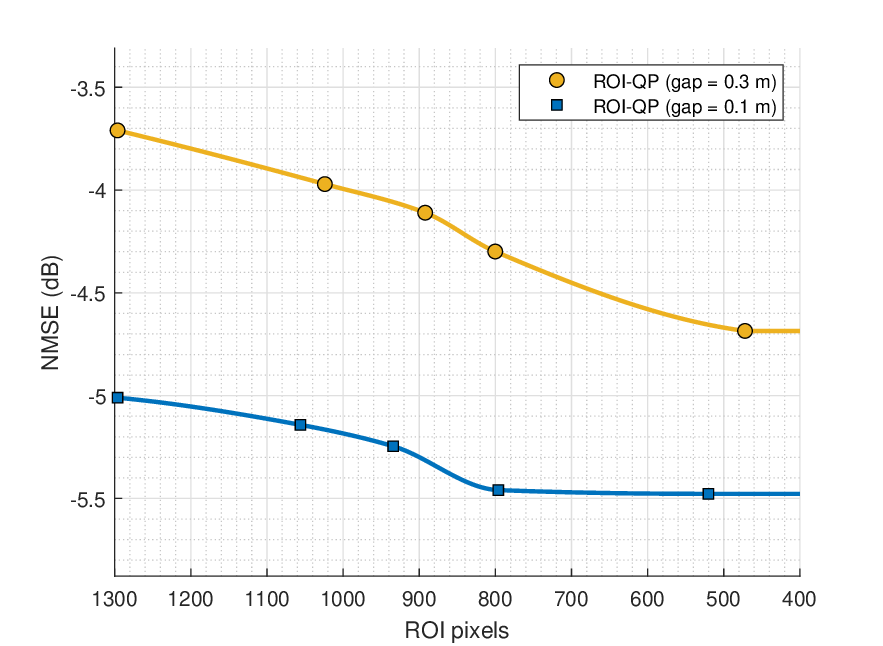}%
  }\hfill
  \subfloat[]{%
    \includegraphics[width=0.75\columnwidth,trim=10 0 10 8,clip]{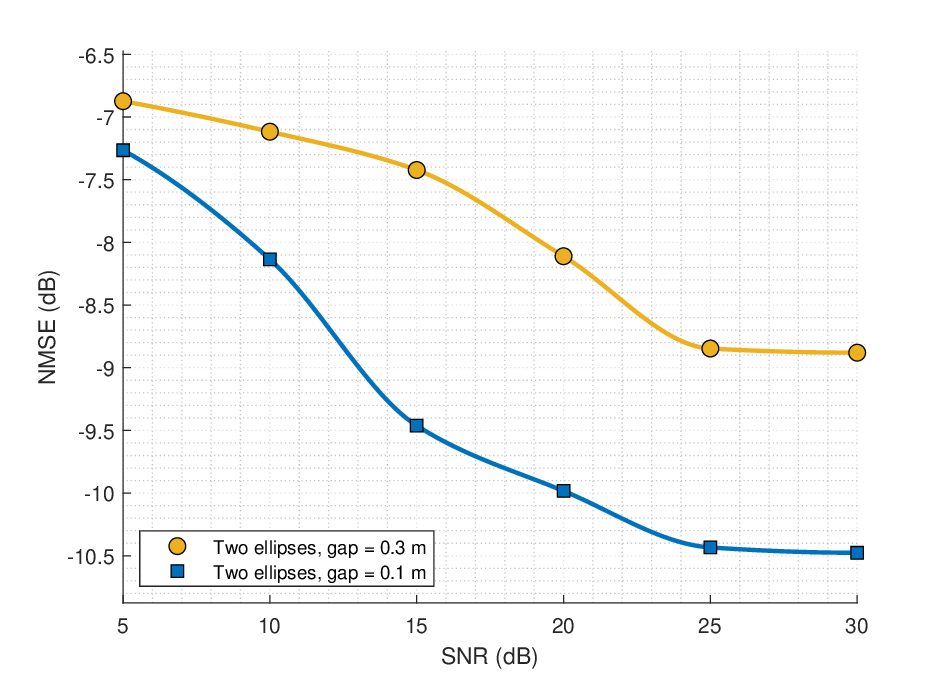}%
  }
  \caption{NMSE results for the two-ellipse cases: (a) NMSE versus ROI pixels, and (b) NMSE versus SNR.}
  \label{fig:nmse-ell}
  \vspace{3mm}
\end{figure}

From Fig.~\ref{fig:nmse-tri}(a), we observe that, at
$\mathrm{SNR}=30$~dB, the \ac{NMSE} of both \ac{ROI}-\ac{QP} and
Tikhonov-\ac{BIM} decreases as the \ac{ROI} shrinks from the full domain
($1{,}296$ pixels) toward the \ac{ASR}, confirming the benefit of removing
redundant background pixels. The \ac{NMSE} reduction is about $1$~dB for
\ac{ROI}-\ac{QP} in the triangular case, while the triangular case with
Tikhonov-\ac{BIM} improves by about $0.7$~dB. Note that the
\ac{NMSE}--\ac{ROI} curves are obtained with a single \ac{QP} iteration
(one-step Born update) to exclude the gain from iterative refinement;
consequently, \ac{ROI}-\ac{QP} appears roughly $3$~dB worse than
Tikhonov-\ac{BIM} in this controlled setting.

When the nominal iterations are enabled, as shown in
Fig.~\ref{fig:nmse-tri}(b), \ac{ROI}-\ac{QP} achieves higher accuracy,
and both methods exhibit the expected monotonic \ac{NMSE} decrease with
increasing \ac{SNR}. A similar decreasing trend with respect to the
\ac{ROI} size is also observed for the two-ellipse cases in
Fig.~\ref{fig:nmse-ell}(a), and the corresponding \ac{NMSE}--\ac{SNR}
behavior is reported in Fig.~\ref{fig:nmse-ell}(b), where the smaller
gap, $0.1$~m, yields a lower initial \ac{NMSE} and a milder sensitivity
to \ac{ROI} reduction. From Fig.~\ref{fig:nmse-t}(a), we observe that,
at $\mathrm{SNR}=30$~dB, the \ac{NMSE} of both \ac{ROI}-\ac{QP} and
Tikhonov-\ac{BIM} decreases as the \ac{ROI} shrinks from the full domain
($1{,}296$ pixels) toward the \ac{ASR}, confirming the benefit of removing
redundant background pixels. The \ac{NMSE} reduction is about $1$~dB for
both \ac{ROI}-\ac{QP} and Tikhonov-\ac{BIM} in the T-shaped case. When
the nominal iterations are enabled, as shown in Fig.~\ref{fig:nmse-t}(b),
\ac{ROI}-\ac{QP} achieves higher accuracy, and both methods exhibit the
expected monotonic \ac{NMSE} decrease with increasing \ac{SNR}.

In summary, the \ac{NMSE} trends of both \ac{ROI}-\ac{QP} and \ac{BIM}
with respect to the \ac{ROI} pixel number corroborate our theoretical
analysis. Moreover, under the nominal iteration setting, \ac{ROI}-\ac{QP}
achieves improved reconstruction accuracy across different scatterer
geometries and noise conditions.

\section{Conclusion}
This paper investigated the origins of ill-conditioning in \ac{CSI}-based electromagnetic material reconstruction for \ac{ISAC} systems.
From an operator-centric viewpoint, we identified a structural dichotomy in the \ac{CSI}-induced \ac{ISAC} scattering operator: background-related columns are highly coherent and dominate the near-rank deficiency, whereas scatterer-related columns are comparatively weakly correlated and determine the effective rank.
This observation provides a theoretical rationale for \ac{ROI}-restricted inversion as a principled means of stabilizing the inverse problem. 
We further derived analytical condition-number bounds that connect the \ac{ROI} size, effective coherence, and conditioning improvement, thereby quantifying the benefit of subspace restriction. 
The proposed analysis not only explains the ill-conditioning inherent in \ac{CSI}-driven material reconstruction but also provides practically relevant theoretical guidance for designing improved algorithms that infer environmental scattering information from communication-native data.
In future environment-aware \ac{ISAC} systems, this framework may support applications such as channel prediction, radio-environment mapping, and wireless digital-twin construction.

\begin{figure}[t]
  \centering
  \subfloat[]{%
    \includegraphics[width=0.75\columnwidth,trim=10 0 10 8,clip]{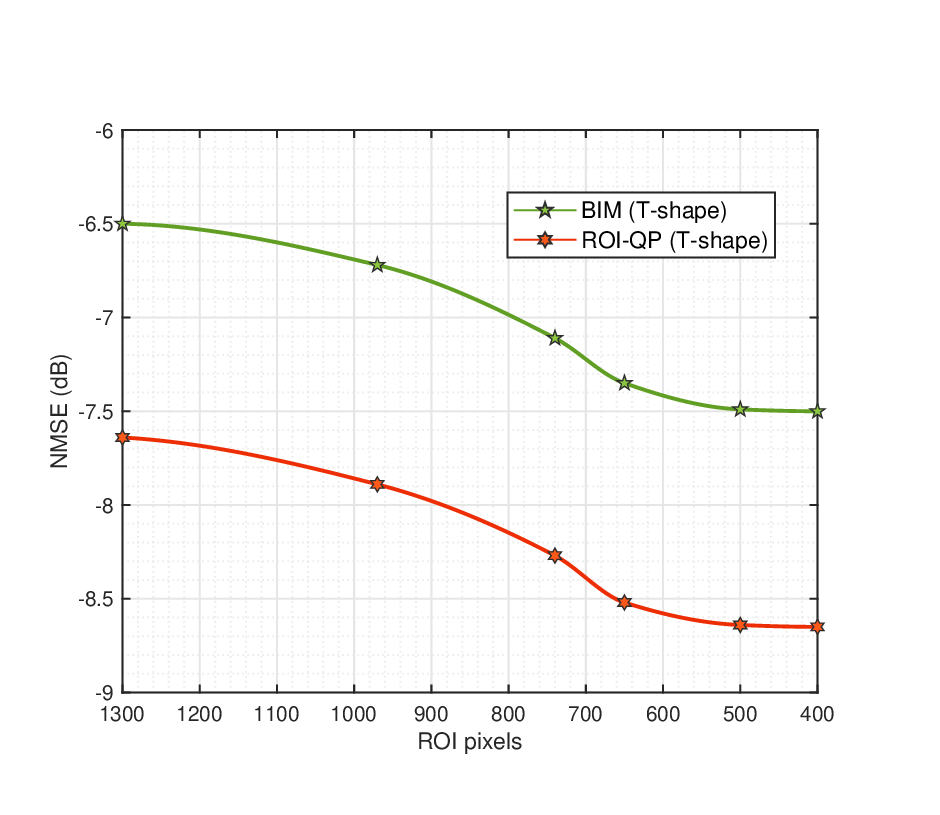}%
  }\vspace{-3mm}\hfill
  \subfloat[]{%
    \includegraphics[width=0.75\columnwidth,trim=10 0 10 8,clip]{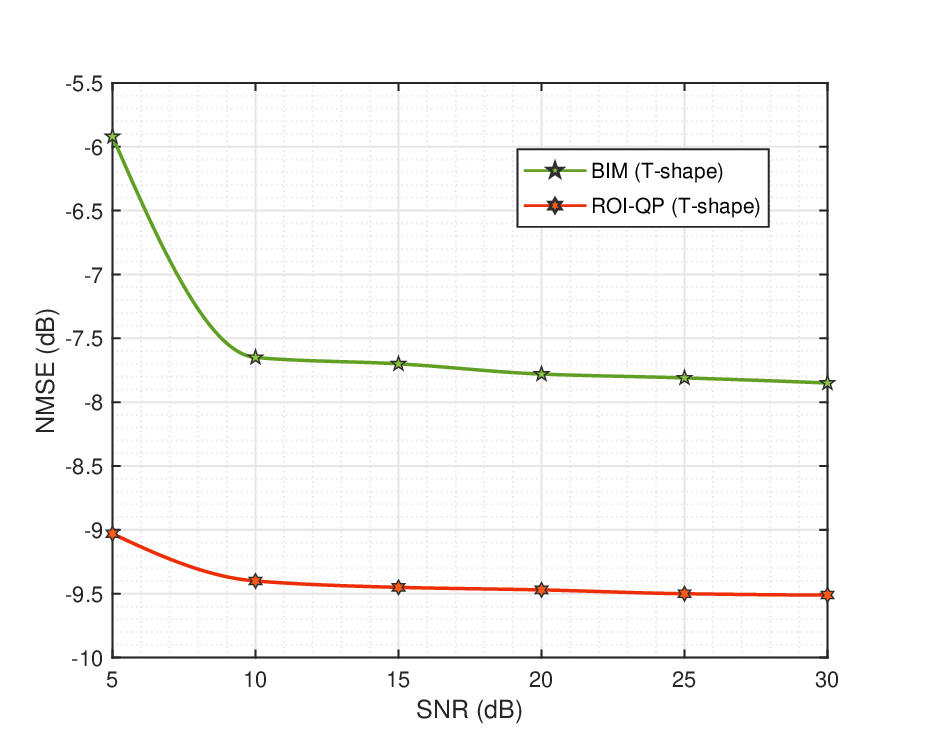}%
  }
  \caption{NMSE results for the T-shaped scatterer: (a) NMSE versus ROI pixels, and (b) NMSE versus SNR.}
  \label{fig:nmse-t}
  \vspace{2mm}
\end{figure}

%\vspace{-1em}
\appendices
\section{The Required Khatri-Rao Identity}

For any vectors $\mathbf x,\mathbf z\in\mathbb C^{m}$ and $\mathbf y,\mathbf w\in\mathbb C^{n}$, the following identities hold:

\begin{equation}\tag{A.1}
    (\mathbf{x} \circ \mathbf{y})^\mathrm{H}(\mathbf{z} \circ \mathbf{w}) = (\mathbf{x}^\mathrm{H} \mathbf{z})(\mathbf{y}^\mathrm{H} \mathbf{w}), 
\end{equation}

\begin{equation}\tag{A.2}
\| \mathbf{x} \circ \mathbf{y} \|_2 = \| \mathbf{x} \|_2 \| \mathbf{y} \|_2 .
\end{equation}

% =========================
% Appendix B
% =========================
\section{Air-Region Column Coherence--Continued}
For notational simplicity, we fix a tone $k$ and omit the subscript $k$ in this appendix. Under an (approximately) \ac{WSS} propagation model, channel responses observed at two
nearby spatial locations are \emph{locally coherent}: their second-order statistics depend
primarily on the displacement $\Delta r=r_j-r_i$, and the normalized spatial correlation
remains close to one when $\|\Delta r\|$ lies within the coherence distance. As a result,
the sub-vectors $\mathbf u_j$ and $\mathbf v_j$ associated with air-region pixels are
nearly colinear with $\mathbf u_i$ and $\mathbf v_i$, respectively. We formalize this
near-colinearity by the following scaling-plus-residual relations:
\begin{equation}\tag{B.1}
\begin{alignedat}{2}
\mathbf{u}_j &= \gamma_{ij}\,\mathbf{u}_i + \Delta {\mathbf{u}}, &\qquad
\|\Delta {\mathbf{u}}\|_2 &\le \varepsilon_u \|\mathbf{u}_i\|_2, \\
\mathbf{v}_j &= \rho_{ij}\,\mathbf{v}_i + \Delta {\mathbf{v}}, &\qquad
\|\Delta {\mathbf{v}}\|_2 &\le \varepsilon_v \|\mathbf{v}_i\|_2 .
\end{alignedat}
\end{equation}
Here, $\gamma_{ij},\rho_{ij}\in\mathbb C$ denote the best complex scalings (projections),
while $\varepsilon_u,\varepsilon_v\ll 1$ quantify the local de-coherence induced by the
spatial displacement.

Define
\begin{equation}\tag{B.2}
\begin{aligned}
U &\triangleq \|\mathbf{u}_i\|_2, \\
V &\triangleq \|\mathbf{v}_i\|_2, \\
\delta_{ij} &\triangleq |\gamma_{ij}|\,\varepsilon_v + |\rho_{ij}|\,\varepsilon_u + \varepsilon_u \varepsilon_v .
\end{aligned}
\end{equation}

We compute the $(i,j)$-th entry of the Gram matrix of $\mathbf A$.
By the Khatri--Rao property (A.1),
\begin{equation}\tag{B.3}
\langle \mathbf{a}_i, \mathbf{a}_j \rangle
= (\mathbf{u}_i^\mathrm{H} \mathbf{u}_j)\,(\mathbf{v}_i^\mathrm{H} \mathbf{v}_j).
\end{equation}
Substituting (B.1) into (B.3) gives
\begin{equation}\tag{B.4}
\begin{aligned}
\mathbf u_i^\mathrm{H}\mathbf u_j
&= \gamma_{ij}\|\mathbf u_i\|_2^2 + \mathbf u_i^\mathrm{H}\Delta\mathbf u, \\
\mathbf v_i^\mathrm{H}\mathbf v_j
&= \rho_{ij}\|\mathbf v_i\|_2^2 + \mathbf v_i^\mathrm{H}\Delta\mathbf v .
\end{aligned}
\end{equation}
By the Cauchy--Schwarz inequality and (B.1),
\begin{equation}\tag{B.5}
\begin{aligned}
\big|\mathbf u_i^\mathrm{H}\Delta\mathbf u\big|
&\le \|\mathbf u_i\|_2\|\Delta\mathbf u\|_2
\le \varepsilon_u U^2, \\
\big|\mathbf v_i^\mathrm{H}\Delta\mathbf v\big|
&\le \|\mathbf v_i\|_2\|\Delta\mathbf v\|_2
\le \varepsilon_v V^2.
\end{aligned}
\end{equation}
Combining the expansion of (B.3) with triangle inequality and the bounds in (B.5), we obtain
\begin{equation}\tag{B.6}
|\langle \mathbf{a}_i, \mathbf{a}_j \rangle|
\;\ge\;
\Big(|\rho_{ij}\gamma_{ij}|-\delta_{ij}\Big)\,U^2 V^2,
\end{equation}
where $\delta_{ij}$ is defined in (B.2).

From property (A.2),
\begin{equation}\tag{B.7}
\|\mathbf a_{i}\|_2 = \|\mathbf u_i\|_2\|\mathbf v_i\|_2 = UV,
\qquad
\|\mathbf a_{j}\|_2 = \|\mathbf u_j\|_2\|\mathbf v_j\|_2 .
\end{equation}
Moreover, by (B.1) and triangle inequality,
\begin{equation}\tag{B.8}
\|\mathbf u_j\|_2 \le (|\gamma_{ij}|+\varepsilon_u)U,
\qquad
\|\mathbf v_j\|_2 \le (|\rho_{ij}|+\varepsilon_v)V,
\end{equation}
hence
\begin{equation}\tag{B.9}
\|\mathbf a_{j}\|_2
\le (|\gamma_{ij}|+\varepsilon_u)(|\rho_{ij}|+\varepsilon_v)\,UV.
\end{equation}
The \ac{NCC} between the two columns is
\begin{equation}\tag{B.10}
\mu_{ij} \triangleq \frac{|\langle \mathbf{a}_i, \mathbf{a}_j \rangle|}
{\|\mathbf{a}_i\|_2 \,\|\mathbf{a}_j\|_2},
\end{equation}
and combining (B.6)--(B.10) yields
\begin{equation}\tag{B.11}
\mu_{ij}
\;\ge\;
\frac{|\rho_{ij}\gamma_{ij}|-\delta_{ij}}
{(|\gamma_{ij}|+\varepsilon_u)(|\rho_{ij}|+\varepsilon_v)}.
\end{equation}

For air-region pixels within a reasonably sized sensing region $D$, the large-scale fading
changes slowly over small displacements, while the small-scale fading remains locally
coherent due to limited angular spread. Consequently, when $\|\Delta r\|$ is within the
coherence distance, the residuals in (B.1) are small:
\begin{equation}\tag{B.12}
\varepsilon_u(\Delta r)\ll 1,\qquad \varepsilon_v(\Delta r)\ll 1 .
\end{equation}
Moreover, $|\gamma_{ij}|$ and $|\rho_{ij}|$ stay uniformly bounded away from zero in this
locally coherent regime, so the denominator of (B.11) is a bounded $\mathcal O(1)$ factor.
Therefore, (B.11) can be simplified to
\begin{equation}\tag{B.13}
\mu_{ij} \;\ge\; 1 - C\,\delta_{ij},
\end{equation}
where $C=\mathcal O(1)$ is a constant. Hence, $\mu_{ij}$ is close to $1$ for any two
air-region columns, implying strong inter-column coherence and, consequently, severe
ill-conditioning of the full-domain operator.

\ifCLASSOPTIONcaptionsoff
  \newpage
\fi

% biography section
% 
% If you have an EPS/PDF photo (graphicx package needed) extra braces are
% needed around the contents of the optional argument to biography to prevent
% the LaTeX parser from getting confused when it sees the complicated
% \includegraphics command within an optional argument. (You could create
% your own custom macro containing the \includegraphics command to make things
% simpler here.)
%\begin{IEEEbiography}[{\includegraphics[width=1in,height=1.25in,clip,keepaspectratio]{mshell}}]{Michael Shell}
% or if you just want to reserve a space for a photo:

\begin{comment}
\begin{IEEEbiography}{Michael Shell}
Biography text here.
\end{IEEEbiography}

% if you will not have a photo at all:
\begin{IEEEbiographynophoto}{John Doe}
Biography text here.
\end{IEEEbiographynophoto}

% insert where needed to balance the two columns on the last page with
% biographies
%\newpage

\begin{IEEEbiographynophoto}{Jane Doe}
Biography text here.
\end{IEEEbiographynophoto}
\end{comment}

% You can push biographies down or up by placing
% a \vfill before or after them. The appropriate
% use of \vfill depends on what kind of text is
% on the last page and whether or not the columns
% are being equalized.

%\vfill

% Can be used to pull up biographies so that the bottom of the last one
% is flush with the other column.
%\enlargethispage{-5in}

% that's all folks
\end{document}